\newcommand{\charf}[1]{\mathbb{1}_{#1}}
\newcommand{\real}{\mathbb R}
\newcommand{\nat}{\mathbb N}
\newcommand{\ent}{ \mathbb Z}
\newcommand{\plan}{\mathbb Z^d}
\newcommand{\realspace}{\mathbb R^d}
\newcommand{\prob}{\mathbb P}
\newcommand{\dif}{\mathrm{d}}
\newcommand{\norm}[1]{\left\lvert #1\right\rvert}
\newcommand{\grandnorm}[1]{\biggl\vert #1\biggr\vert}
\newcommand{\dnorm}[1] {\left\lVert #1 \right\rVert}
\newcommand{\proj}[1] {\Pi_{#1}}
\newcommand{\muinv}[1]{\mu_{\mathrm{inv}}^{\scriptscriptstyle (\!#1\!)}}
\newcommand{\sleb}[1]{\delta^{\scriptscriptstyle (\!#1\!)}}
\newcommand{\brond}[1]{\mathcal{B}^{\scriptscriptstyle (\!#1\!)}}
\newcommand{\gammaplus}{\gamma^{\scriptscriptstyle(\!+\!)}}
\newcommand{\gammamoins}{\gamma^{\scriptscriptstyle(\!-\!)}}
\newcommand{\vect}[1]{\ushort{#1}}
\spnewtheorem*{corollarystar}{Corollary}{\bf}{\it} %pour ne pas numŽroter Coroll 1
\title{Exponential Decay of Correlations for Strongly Coupled Toom Probabilistic Cellular Automata}
\author{Augustin de Maere \and Lise Ponselet}
\institute{A. de Maere \and L. Ponselet \at UniversitŽ catholique de Louvain, IRMP, Chemin du cyclotron 2, bte L7.01.03, B-1348 Louvain-la-Neuve, Belgium.\\Tel.: +32-10-473279\\Fax: +32-10-472414\\ \email{lise.ponselet@uclouvain.be}}
\date{}
\begin{document}

\maketitle

\begin{abstract}
We investigate the low-noise regime of a large class of probabilistic cellular automata, including the North-East-Center model of Toom. They are defined as stochastic perturbations of cellular automata belonging to the category of monotonic binary tessellations and possessing a property of erosion. We prove, for a set of initial conditions, exponential convergence of the induced processes toward an extremal invariant measure with a highly predominant spin value. We also show that this invariant measure presents exponential decay of correlations in space and in time and is therefore strongly mixing.
\keywords{Probabilistic Cellular Automata \and Phase Transition \and Toom Models \and Monotonic Binary Tessellations }
\end{abstract}

\section{Introduction}\label{secIntro}
Probabilistic cellular automata (PCA) are a type of models of multicomponent systems whose time evolution is driven by local interactions blurred with some noise. More precisely, they are discrete-time stochastic processes with the Markov property, made of lattices of components whose individual states take values in a finite set and are simultaneously updated at every time step. The transition rules involve interactions between neighbouring components.

The long-time limit of these processes has been the subject of many numerical and theoretical results in the last fifty years and a lot of questions remain open - see for instance the surveys \cite{To95,DoKrTo90}. Despite the apparent simplicity of their discrete configuration space and merely local interactions, PCA exhibit a variety of macroscopic tendencies. Moreover, due to their basic definition, these models provide a favorable field of study in order to give firm foundations to the growing knowledge about non-equilibrium phenomena. In particular, interest concentrates on the existence of phase transitions in this context of non-equilibrium statistical physics. For a non-negligible subset of the parameter set of certain PCA models, the limiting states of the processes can be non-unique and depend strongly on the initial conditions, thus providing examples of systems which keep remembering part of the data from their remote past \cite{BeGr85}.  

Here we consider the PCA resulting from small random perturbations of the deterministic cellular automata in the class of \textit{monotonic binary tessellations} (MBT) examined in \cite{To76,To80}. In an MBT, the \textit{spin} variables disposed on the sites of a $\plan$ lattice can take only two different values and are all simultaneously updated according to the same monotonic function of their neighboursÕ states. In \cite{To76,To80}, Toom gave a criterion for an MBT to be an \textit{eroder}, i.e. to erase in a finite time any finite island of impurities in a predominantly homogeneous configuration of one of the two types of spins. The completely homogeneous space-time configuration with this spin value everywhere is then said to be an \textit{attractive} trajectory of that MBT. He also proved in \cite{To80} that this erosion condition implies the \textit{stability} of the homogeneous trajectory under the introduction of a small error rate: the corresponding PCA admits a stationary measure for which the low probability of finding an impurity is preserved. Another proof of this stability theorem in a more general context was later given in \cite{BrGr91}, using renormalization group methods.

For example, the one-dimensional percolation PCA of Stavskaya and Toom's two-dimensional North-East-Center voting PCA fall within this class of perturbations of MBT which are eroders. They both present a phase transition as non-uniqueness of their invariant probability measures in the low-noise regime follows from this stability theorem, while, when the noise is high enough, the processes converge toward a unique invariant measure, regardless of the initial state \cite{Do71}.

Beside this stability result, very little is known about the long-time asymptotics of this class of PCA in the nearly deterministic regime. It has already been explored by means of simulations \cite{BeGr85,DiMa11,Mak98,Mak99,VaPePi69}. Here we address the problem from a theoretical point of view. By virtue of the ergodic theorem, the asymptotic behavior of such a PCA is actually given by its ergodic invariant probability measures. We focus on the statistical properties of the invariant measure with a predominant spin value which is obtained as a perturbation of the stable homogeneous trajectory. Reference \cite{BeKrMa93} established exponential convergence to this stationary measure, and therefore exponential decay of its correlations, for the Stavskaya PCA and a class of multidimensional generalizations of it.

In this paper, we use a different method allowing us to extend this result to the North-East-Center model and all the other MBT satisfying ToomÕs erosion condition, with a restriction to MBT without memory, where the state of the system at any time $t$ depends only on its previous state at time $t-1$ and not directly on earlier states. We prove rigorously that for some well-chosen initial conditions, the process converges exponentially fast toward the invariant measure under consideration. We also show that this measure has exponential decay of correlations in space and in time. In particular, it is ergodic and therefore extremal. The proof is based on a perturbative expansion, with paths and graphs, which combines a technique of decoupling in the pure phases previously introduced and developed for coupled map lattices \cite{KeLi06,deM10} with graphs constructed for a Peierls argument in the proof of stability in \cite{To80}.

\section{Formalism and Results}\label{secMain}
We consider the cellular automata in the class of monotonic binary tessellations described in Section \MakeUppercase{\romannumeral 4} of \cite{To80} except that we restrict ourselves to MBT without memory. Some of our notations are similar to the notations of \cite{To80} in order to facilitate the use of the graphs constructed there but they are not identical because we don't need as general a framework here.

Let $S = \{-1,1\}$ be the binary state space for a component of the system, which we will call a \textit{spin}, and let $X = S^{\plan}$ be the configuration space for the whole system. For any spin configuration $\vect{\omega} \in X$, $\omega_x$ will denote the value of $\vect{\omega}$ at site $x$ and $\vect{\omega}_{\Lambda}$ will denote the vector $\{\omega_x\}_{x \in \Lambda}$ for any $\Lambda \subseteq \plan$. We will also use the notation $(\vect{\omega}_{\neq x}, a)$ for the configuration obtained from $\vect{\omega}$ by replacing the spin $\omega_x$ at site $x$ with the value $a\in S$.

In a tessellation, the evolution law for the state of the spin at a site $x$ of the $\plan$ lattice involves its neighbours, which are defined as the elements of $U(x) = x + U$ for a fixed finite set $U= \{u_1, \dotsc , u_R \} \subset \plan$. Starting from a configuration $\vect{\omega} \in X$, the simultaneous updating of every spin at every time step consists in transforming the spin $\omega_x$ at site $x$ into the value
\begin{equation*}
\phi_x(\vect{\omega})=\phi \left(\omega_{x+u_1}, \dotsc , \omega_{x+u_R}\right)
\end{equation*}
for a given function $\phi : S^R \to S$. From now on, as we consider monotonic binary tessellations, $\phi$ will be monotonic in the sense that if $\omega_u \leq \omega'_u$ for all $u \in U$, then $\phi(\vect{\omega}_U) \leq \phi({\vect{\omega}}_U')$. We also reject the trivial case of a constant function $\phi$. Note that these two assumptions imply that $\phi(\pm 1, \dotsc, \pm 1)=\pm 1$ and then that the configurations $\{+1\}_{x \in \plan}$ and $\{-1\}_{x \in \plan}$ are left invariant by the deterministic time evolution. They generate completely homogeneous trajectories.

In addition, we suppose that $\phi$ verifies the Toom erosion criterion. It is expressed in terms of the \textit{plus sets}: the subsets $Z \subseteq U$ such that if $\omega_u =+1$ for all $u \in Z$, then $\phi(\vect\omega_U)=+1$. When applied to the particular case of an MBT without memory, the criterion given in Proposition 1 of \cite{To76} and in Theorem 6 of \cite{To80} states that the intersection of all convex hulls of plus sets in $\realspace$ must be empty.

We now describe a way to introduce some noise in such an eroding MBT to generate a PCA. The system follows the same updating rule as in the deterministic case but, at each site of the lattice and at each time step, an error can occur with a probability of order $\epsilon$, for some $\epsilon$ in $[0,1]$, the spin then taking the opposite value. Occurrence of an error at a site is independent from occurrence of errors at other sites. The process resulting from the sequence of simultaneous updatings of all spins becomes a stochastic process. The local transition probabilities $p(\xi_x| \vect{\omega})$ of that process only depend on $\vect\omega_{U(x)}$. We suppose that they satisfy the following assumptions, for some given values of the parameters $\epsilon \in [0,1]$ and $\alpha\geq0$:
\begin{enumerate}[\hspace{1cm}({A}1)]
\item if $\xi_x \neq \phi_x (\vect{\omega})$, then $p(\xi_x| \vect{\omega}) \leq \epsilon$;\label{assump:2}
\item if $a=\phi_x (\vect{\omega})$, then $|p(\xi_x| \vect{\omega})-p(\xi_x| \vect{\omega}_{\neq y},a)| \leq \alpha \, p(\xi_x| \vect{\omega})$ for all $y \in \plan$.\label{assump:3}
\end{enumerate}
Assumption (A\ref{assump:2}) is the low-noise condition: it ensures that, at each site of the lattice, the deterministic rule is followed with a probability of at least $1-\epsilon$. Note that we make no restriction about the possible bias in favor of errors of a specific spin sign. Assumption (A\ref{assump:3}), used in conjunction with the monotonicity of $\phi$, ensures a form of decoupling in the pure phases: if a spin was flipped without changing the deterministic prescription given by the value of $\phi_x$, it would be of little consequence to the involved transition probabilities.  The local transition probabilities define a transition probability kernel on $X$, which can be formally written as
\begin{equation}\label{re:6}
\prob[\vect{\xi} | \vect{\omega}] = \prod_{x \in \plan} p(\xi_x| \vect{\omega}).
\end{equation}

Assume that $S$ is endowed with the discrete topology and $X$ with the product topology. Let $\mathcal{C}(X)$ be the set of continuous functions from $X$ into $\real$ with the norm: $\norm{\varphi}_{\infty}  = \sup_{\vect{\omega} \in X} \norm{\varphi(\vect{\omega})}$. Since $X$ is compact, $\mathcal{C}(X)$ with the norm $\norm{.}_{\infty}$ is a Banach space, and its dual is $\mathcal{M}(X)$, the space of signed Borel measures on $X$ with the norm:
\begin{equation*}
\norm{\mu} = \sup \{\, \mu(\varphi) \mid \varphi \in \mathcal{C}(X),\, \norm{\varphi}_{\infty} \leq 1\, \}.\label{re:2}
\end{equation*}
For PCA on infinite spaces, it is often useful to consider the following semi-norm on $\mathcal{C}(X)$: if $\delta_x \varphi(\vect{\omega})  = \varphi(\vect{\omega})-\varphi(\vect{\omega}_{\neq x},- \omega_x)$, we define:
\begin{align}
\dnorm{\varphi}  = \sum_{x \in \plan} \norm{\delta_x \varphi}_{\infty}.\label{re:1}
\end{align}

Let $T: \mathcal{M}(X) \to \mathcal{M}(X)$ be the transfer operator associated to the transition probability kernel of \eqref{re:6}, explicitly defined as:
\begin{equation}
T \mu(\varphi) = \int \dif \vect{\omega} \int \dif \vect{\xi}\; \varphi(\vect{\xi})\, \prob[\vect{\xi}|\vect{\omega}]\, \mu(\vect{\omega}),\label{re:3}
\end{equation}
for any $\varphi$ in $\mathcal{C}(X)$.

Let $\charf{-,\Lambda}$ denote the indicator function of the subset $\{ \vect{\omega} \in X | \omega_x  = -1\ \forall x \in \Lambda\}$ and, for any measurable set $Y \subseteq X$, let $\charf{Y}: \mathcal{M}(X) \to \mathcal{M}(X)$ be the operator defined as:
\begin{equation*}
\charf{Y}\mu(\varphi) =  \mu(\varphi\, \charf{Y} ) \quad \forall \varphi \in \mathcal{C}(X).\label{re:5}
\end{equation*}

As the MBT defined above satisfies the Toom erosion condition, the results of \cite{To80} apply to it. Theorems 5 and 6 of \cite{To80} imply the stability of the homogeneous trajectory with the spin value $+1$ everywhere. More precisely, let $\sleb{+}$ be the Dirac measure concentrated on the configuration $\{+1\}_{x \in \plan}$. Then, using these theorems and Definition 1 of a stable trajectory in \cite{To80}, we have
\begin{equation}\label{stabil}
\lim_{\epsilon \to 0} \sup_{\substack{n \in \nat \\ x\in \plan}} T^n\sleb{+}  \left(\charf{-,\{x\}}\right) = 0 .
\end{equation}
Now consider the following sequence of measures:
\begin{equation*}
\left( \frac{1}{n} \sum_{k= 0}^{n-1} T^k \sleb{+}\right)_{n\in \nat_0}.
\end{equation*}
We can always extract from it a weakly-* convergent subsequence. Indeed, the Banach-Alaoglu theorem states that the unit ball of $\mathcal{M}(X)$ is compact in the weak-* topology. The associated limit $\muinv{+}$ is explicitly given by
\begin{equation}\label{muinv}
\muinv{+} (\varphi) = \lim_{j \to \infty} \left( \frac{1}{n_j} \sum_{k=0}^{n_j-1} T^k \sleb{+} (\varphi) \right) \quad \forall \varphi \in \mathcal{C}(X),
\end{equation}
for a certain subsequence $\left( n_j \right) _{j \in \nat}$ of increasing positive integers. The measure $\muinv{+}$ is obviously an invariant probability measure, in the sense that $T \muinv{+} = \muinv{+}$, and equation \eqref{stabil} implies that
\begin{equation}\label{muinvstabil}
\lim_{\epsilon \to 0} \sup_{x\in \plan} \muinv{+}  \left(\charf{-,\{x\}}\right) = 0 .
\end{equation}
\begin{sloppypar}\begin{example}[The Stavskaya model] 
The percolation PCA of Stavskaya \cite{StPi68} is one of the first PCA for which the existence of a phase transition has been rigorously proven - see an outline of the proof using the method of contours in Chapter 1 of \cite{DoKrTo90}. It has $d=1$, $U=\{ 0,1\} \subset \ent$ and $\phi (\omega_0,\omega_1) = -1$ if and only if $\omega_0=\omega_1=-1$. Errors are strongly biased in the sense that they can only transform a spin $+1$ into a $-1$ but not a $-1$ into a $+1$. The Dirac measure $\sleb{-}$ is then always a stationary state. We will call it $\muinv{-}$. Besides, the Toom criterion is satisfied by these $U$ and $\phi$ since the intersection of the plus sets $\{0\}$ and $\{1\}$ is empty, therefore equation \eqref{muinvstabil} carries the existence of an $\epsilon_c > 0$ such that for all $\epsilon \leq \epsilon_c$, the process admits a second different invariant measure $\muinv{+}$.
\end{example}
\begin{example}[The Toom model]
The North-East-Center voting model has $d=2$ and $U=\{(0,0);(1,0);(0,1)\} \subset \mathbb Z ^2$ and $\phi$ returns the majority spin among these three neighbours' states. This model verifies the Toom erosion condition too, because $\{(0,0);(1,0)\}$, $\{(1,0);(0,1)\}$ and $\{(0,1);(0,0)\}$ are plus sets whose convex hulls have an empty intersection. Moreover, the model is symmetric under the interchange of positive and negative spin signs. Combining this symmetry with definition \eqref{muinv} gives two invariant measures, $\muinv{+}$ which satisfies equation \eqref{muinvstabil} and $\muinv{-}$ which satisfies the symmetric counterpart of equation \eqref{muinvstabil}, thus revealing that they differ as long as $\epsilon$ is small enough.
\end{example}\end{sloppypar}

Then, in such models, if $\epsilon$ is below a critical threshold $\epsilon_c $, there exists an infinite number of invariant probability measures, since any convex combination of $\muinv{+}$ and $\muinv{-}$ is also an invariant probability measure. But, among these invariant measures, some have interesting properties such as ergodicity and exponential decay of correlations.
In \cite{BeKrMa93}, the low-noise regime of a class of PCA is examined, including the Stavskaya model and a multidimensional generalization of it, but not the Toom model. Exponential convergence toward equilibrium of the processes with the initial condition $\sleb{+}$ is proven for these PCA by constructing a cluster expansion. Exponential decay of correlations in space and in time follows for the non-trivial natural invariant measure $\muinv{+}$ that appears at the phase transition of the Stavskaya model.

Our argument extends the results of \cite{BeKrMa93} to the whole class of PCA defined above. It relies on an expansion which isolates the influence of each space-time site on each other site in its future, along several paths of influence. These paths pass through the dominant phase with positive spins almost everywhere and Assumption (A\ref{assump:3}) of decoupling in the pure phases can be used in order to bound the influence of spins one on another. The paths seldom encounter a negative spin. Whenever they do, in view to evaluate how improbable that negative spin is, we will attach to it a graph borrowed from a Peierls argument in the proof of stability in \cite{To80} and bring Assumption (A\ref{assump:2}) into play. The fact that paths select one site at a time will allow us to make use of the one-dimensional Toom graphs by associating them to single sites separately. As for the cluster expansion in \cite{BeKrMa93}, it requires contours which can enclose blocks of several spins, as the Stavskaya contours do, but not the most general Toom graphs, as far as we can see.

In Sections \ref{section:path} to \ref{section:decouplpure}, we will show that any initial probability measure in a suitable basin of attraction of $\mathcal{M}(X)$ converges exponentially fast toward $\muinv{+}$.
We will consider the sets
\begin{equation*}
\brond{+}(K,\epsilon') = \Big\{\ \mu \in \mathcal{M}(X)\ \Big|\  \norm{\,\charf{-,\Lambda}\mu\,} \leq K\, \epsilon'^{\norm{\Lambda}}\quad \forall \Lambda \subseteq \plan\Big\} ,
\end{equation*}
with $K \geq 0$ and $\epsilon' \in [0,1]$ and prove the following result:
\begin{theorem} \label{thm:exp}
For any MBT without memory characterized by a non-constant monotonic function $\phi$ verifying the Toom erosion condition, there exist $\alpha_* >0$ and a positive and strictly decreasing function $\epsilon_*$ defined on $[0,\alpha_*[$ such that, for all $\alpha \in [0, \alpha_* [$ and $\epsilon \in [0, \epsilon_* (\alpha)[$, the following assertion is true for any PCA satisfying the corresponding Assumptions (A\ref{assump:2}) and (A\ref{assump:3}). For any probability measure $\mu$ in $\brond{+}(K,\epsilon')$ with $K \geq 0$ and $\epsilon' < \epsilon_* (\alpha)$, there exist some constants $C < \infty$ and $\sigma <1$ such that, for all $\varphi \in \mathcal{C}(X)$ and all $n \in \nat$,
\begin{equation*}
\norm{\,T^n \mu(\varphi) - \muinv{+}(\varphi)\,} \leq C\,\dnorm{\varphi}\,\sigma^n.
\end{equation*}
\end{theorem}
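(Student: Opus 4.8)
The idea is to expand $T^{n}\mu(\varphi)$ perturbatively so that its dependence on the initial datum $\mu$ decays geometrically in $n$, and then to recognise the common limit of all orbits starting in the basin as $\muinv{+}$. The first step is a quantitative sharpening of the stability statement \eqref{stabil}: one shows there is a constant $K_{0}$, depending on the model and on $\alpha$ and $\epsilon$, such that the orbit of any probability measure $\mu\in\brond{+}(K,\epsilon')$ enters and then stays in $\brond{+}(\max\{K,K_{0}\},\epsilon')$ once $\epsilon'$ is below the relevant threshold; concretely, if the negative spins of a configuration form sufficiently sparse clusters, the probability that a prescribed finite set $\Lambda$ is entirely negative one time step later is at most a constant times $\epsilon'^{\norm{\Lambda}}$. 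This is where the Peierls/erosion argument of Section \MakeUppercase{\romannumeral 4} of \cite{To80} enters: to each space-time site carrying a $-1$ one attaches a connected graph which, by the erosion condition (empty intersection of the convex hulls of the plus sets), must be large, and each unit of such a graph costs a factor of order $\epsilon$ through Assumption (A\ref{assump:2}); summing over graphs against their combinatorial weight closes the estimate for $\epsilon$ small. A feature used throughout the rest of the proof is that these graphs can be built one site at a time, so that only the one-dimensional form of Toom's construction is needed.

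\emph{The expansion along paths of influence.} Fix $\varphi$ with $\dnorm{\varphi}<\infty$; the asserted bound is empty otherwise, and such functions are dense in $\mathcal{C}(X)$. Iterating \eqref{re:3} and \eqref{re:6}, one writes $T^{n}\mu(\varphi)$ as an integral over the entire space-time history of the configuration. Telescoping each local kernel $p(\,\cdot\mid\,\cdot\,)$ around its value on the all-$+1$ neighbourhood, and using the increments $\delta_{x}\varphi$ to resolve the dependence of $\varphi$ on single coordinates, one expands $T^{n}\mu(\varphi)$ as a sum over \emph{paths}: finite sequences of space-time sites, each step moving one unit back in time and a bounded distance in space, that trace how a perturbation at one site propagates to the observable. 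A step staying in the pure $+1$ phase produces, through Assumption (A\ref{assump:3}) together with the monotonicity of $\phi$, a factor of order $\alpha$ (or one absorbed into the contraction already built into $\dnorm{\,\cdot\,}$); whenever a path meets a site carrying a $-1$, which is rare, one pays for it by attaching one of the graphs of the first step and invoking Assumption (A\ref{assump:2}), which gives a factor of order $\epsilon$ per graph unit. Paths that survive down to time $0$ carry the dependence on $\mu$; those that terminate earlier contribute a term independent of $\mu$. This is, in the present non-equilibrium setting, the analogue of the decoupling-in-the-pure-phases expansions of \cite{KeLi06,deM10}, with Toom graphs now playing the role that Stavskaya contours play in \cite{BeKrMa93}.

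\emph{Summation, rate and identification of the limit.} The crucial estimate is that the total weight of all paths of time-depth at least $n$ is bounded by $C\,\dnorm{\varphi}\,\sigma^{n}$ with $\sigma<1$: one counts paths together with their attached graphs by their total size and checks that the ensuing combinatorial factors are dominated by the smallness, of order $\alpha$ and $\epsilon$, of the per-step weights; this is precisely what forces the existence of $\alpha_{*}>0$ and of a strictly decreasing $\epsilon_{*}$, since weaker decoupling (larger $\alpha$) must be offset by a smaller admissible noise. Granting this, two orbits starting in the basin differ only through paths that reach time $0$, whence $\norm{T^{n}\mu_{1}(\varphi)-T^{n}\mu_{2}(\varphi)}\leq C\,\dnorm{\varphi}\,\sigma^{n}$ for all probability measures $\mu_{1},\mu_{2}\in\brond{+}(K,\epsilon')$. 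Applying this with $\mu_{1}=T^{m}\mu$ (still in the basin, by the first step) and $\mu_{2}=\mu$ shows that the sequence $\bigl(T^{n}\mu(\varphi)\bigr)_{n}$ is Cauchy with geometric rate, hence convergent to a limit independent of the choice of $\mu$ in the basin; since $\norm{T^{n}\mu(\varphi)}\leq\norm{\varphi}_{\infty}$ on the dense set where $\dnorm{\varphi}<\infty$, this limit extends to a $T$-invariant probability measure. Taking $\mu=\sleb{+}\in\brond{+}(1,\epsilon')$ identifies it with $\lim_{n}T^{n}\sleb{+}(\varphi)$; since that sequence converges, its averages $\tfrac{1}{n}\sum_{k=0}^{n-1}T^{k}\sleb{+}(\varphi)$ converge to the same value, which by \eqref{muinv} equals $\muinv{+}(\varphi)$. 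Letting $m\to\infty$ in the Cauchy estimate then yields $\norm{T^{n}\mu(\varphi)-\muinv{+}(\varphi)}\leq C\,\dnorm{\varphi}\,\sigma^{n}$, the assertion of the theorem.

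\emph{Main obstacle.} I expect the hard part to be the bookkeeping of the expansion: defining the paths and their attached Toom graphs so that every term of the iterated kernel is accounted for exactly once; so that each path step genuinely carries a small factor, which is where Assumption (A\ref{assump:3}) and the monotonicity of $\phi$ have to be used with care, since a crude telescoping destroys the required contraction; and so that the joint entropy of paths and graphs stays summable against those small per-unit weights. Reaching a clean product bound -- roughly, the number of path-and-graph configurations of a given total size times a per-unit weight raised to that size, with the weight strictly below $1$ -- is the real point; the forward invariance of the basin and the identification of the limit are comparatively soft.
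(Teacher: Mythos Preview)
Your expansion strategy---telescoping along paths of influence, extracting a factor $\alpha$ at pure-phase steps via Assumption (A\ref{assump:3}) and monotonicity, and attaching Toom graphs at negative-phase steps to harvest factors of $\epsilon$ via Assumption (A\ref{assump:2})---matches the paper's approach and correctly identifies where the hard combinatorics lies. However, your route to the limit has a genuine gap.

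The forward invariance of the basin $\brond{+}(K,\epsilon')$, which you assert in your first step and then rely on when you apply the comparison estimate with $\mu_{1}=T^{m}\mu$, is not established and is in fact explicitly flagged as open in the paper: the authors write that they do not know whether $T^{k}\sleb{+}$ belongs to some $\brond{+}(K,\epsilon')$, even for this simplest initial measure. The difficulty is that membership in $\brond{+}$ demands the uniform multi-site bound $\norm{\charf{-,\Lambda}\nu}\leq K\epsilon'^{\norm{\Lambda}}$ for \emph{every} finite $\Lambda$, and the Toom graph construction, being attached to single sites one at a time, does not obviously deliver this product structure for the evolved measure. Your sketch (``the probability that a prescribed finite set $\Lambda$ is entirely negative one time step later is at most a constant times $\epsilon'^{\norm{\Lambda}}$'') is precisely the missing estimate, and the single-site Peierls argument you invoke does not supply it. Without it, your Cauchy-sequence identification of the limit collapses.

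The paper sidesteps this by never comparing two basin orbits. Instead it works directly with the zero-mass signed measure $\bar\mu=\mu-\muinv{+}$ and applies the path expansion to $T^{n}\bar\mu$. The $\mu$-half is bounded by Lemma~\ref{lemma:2} using the hypothesis $\mu\in\brond{+}(K,\epsilon')$. For the $\muinv{+}$-half, one unfolds the definition \eqref{muinv} as a Ces\`aro limit of $T^{k}\sleb{+}$ and, rather than asking whether $T^{k}\sleb{+}$ lies in any $\brond{+}$, extends the Toom graph construction backward from time $0$ down to time $-k$: since the initial condition at time $-k$ is $\sleb{+}$, every graph vertex that reaches the bottom is automatically an error, and the estimate closes uniformly in $k$. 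This backward extension of the graphs is what replaces your forward-invariance step and makes the argument go through.
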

Note that for PCA which are symmetric under the interchange of the two types of spins, such as the North-East-Center model, the symmetric result for $\mu$ in the class $\brond{-} \left( K, \epsilon' \right) $ and for $\muinv{-}$ is also valid.

Eventually, in Section \ref{section:corollaries}, we will prove that $\muinv{+}$ has exponential decay of correlations in space and in time and is, consequently, strongly mixing.
\begin{corollarystar}\label{coroll:1}
Assume that $\alpha < \alpha_*$ and that $\epsilon < \epsilon_*(\alpha)$, with $\alpha_*$ and $\epsilon_*$ as given by Theorem \ref{thm:exp}. Then there exist some constants $C'< \infty$ and $\eta <1$ such that for any $\varphi, \psi$ in $\mathcal{C}(X)$ with $\dnorm{\varphi} < \infty$ and $\dnorm{\psi}< \infty$, and with a positive Manhattan distance $d(\varphi,\psi) $ between their supports, we have
\begin{equation*}
\norm{\muinv{+}(\varphi \psi) - \muinv{+}(\varphi) \muinv{+}(\psi)} \leq C' \left( \dnorm{\varphi} \norm{\psi}_{\infty} + \norm{\varphi}_{\infty} \dnorm{\psi} \right)   \eta^{d(\varphi,\psi)}.
\end{equation*}
\end{corollarystar}
For the exponential decay of correlations in time,
we define the operator $T: \mathcal{C}(X) \to \mathcal{C}(X)$:
\begin{equation*}
T \varphi(\vect{\omega}) = \int \dif \vect{\xi} \; \varphi(\vect{\xi}) \ \prob[\vect{\xi}|\vect{\omega}]\label{m:13},
\end{equation*}
which is simply the dual of the transfer operator acting on signed measures.
\begin{theorem}Assume that $\alpha < \alpha_*$ and that $\epsilon < \epsilon_*(\alpha)$ with $\alpha_*$ and $\epsilon_*$ as given by Theorem \ref{thm:exp}. Then, for any continuous functions $\varphi,\psi$ with finite supports, there exist some constants $C_{\varphi,\psi} < \infty$ and $\sigma < 1$ such that, for all $n \in \nat$,\label{coroll:2}
\begin{equation*}
\norm{\muinv{+}(\varphi T^n\psi) - \muinv{+}(\varphi) \muinv{+}(\psi)} \leq C_{\varphi,\psi}\, \sigma^{n}.
\end{equation*}
\end{theorem}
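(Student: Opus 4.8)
The plan is to deduce this from Theorem~\ref{thm:exp} by a duality argument. Recall that, as noted just above its statement, the operator $T$ on $\mathcal{C}(X)$ is dual to the transfer operator on $\mathcal{M}(X)$, so that $(T^n\mu)(\chi)=\mu(T^n\chi)$ for every $\mu\in\mathcal{M}(X)$, $\chi\in\mathcal{C}(X)$ and $n\in\nat$. Writing $\varphi\,\muinv{+}$ for the finite signed measure with density $\varphi$ relative to $\muinv{+}$, this gives
\begin{equation*}
\muinv{+}(\varphi\, T^n\psi)=\big(\varphi\,\muinv{+}\big)(T^n\psi)=\big(T^n(\varphi\,\muinv{+})\big)(\psi) .
\end{equation*}
Since $\muinv{+}(\varphi)\,\muinv{+}(\psi)=\big(T^n(\muinv{+}(\varphi)\,\muinv{+})\big)(\psi)$ by $T\muinv{+}=\muinv{+}$, it suffices to show that $\big(T^n\tilde\nu\big)(\psi)$ decays exponentially, where $\tilde\nu:=\varphi\,\muinv{+}-\muinv{+}(\varphi)\,\muinv{+}$ has total mass zero.

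Next I would split $\tilde\nu$ into pieces to which Theorem~\ref{thm:exp} applies. Write $\varphi=\varphi_+-\varphi_-$ with $\varphi_\pm=\max(\pm\varphi,0)\geq0$ continuous, put $c_\pm=\muinv{+}(\varphi_\pm)\geq0$, and, when $c_\pm>0$, set $\mu_\pm:=c_\pm^{-1}\,\varphi_\pm\,\muinv{+}$, which is a probability measure (if $c_\pm=0$ the positive measure $\varphi_\pm\,\muinv{+}$ vanishes and its term drops out). From $\norm{\charf{-,\Lambda}\mu_\pm}=\sup_{\norm{\chi}_\infty\leq1}\muinv{+}(c_\pm^{-1}\varphi_\pm\chi\,\charf{-,\Lambda})$ and $|\varphi_\pm\chi\,\charf{-,\Lambda}|\leq\norm{\varphi_\pm}_\infty\,\charf{-,\Lambda}$ one gets $\norm{\charf{-,\Lambda}\mu_\pm}\leq c_\pm^{-1}\norm{\varphi_\pm}_\infty\,\norm{\charf{-,\Lambda}\muinv{+}}$, so that $\mu_\pm\in\brond{+}(K_\pm,\epsilon_0)$ with $K_\pm=c_\pm^{-1}\norm{\varphi_\pm}_\infty K_0$, provided we know $\muinv{+}\in\brond{+}(K_0,\epsilon_0)$ for some $K_0<\infty$ and some $\epsilon_0<\epsilon_*(\alpha)$. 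Since $c_+-c_-=\muinv{+}(\varphi)$, we have $\tilde\nu=c_+(\mu_+-\muinv{+})-c_-(\mu_--\muinv{+})$, hence, again by $T^n\muinv{+}=\muinv{+}$,
\begin{equation*}
\big(T^n\tilde\nu\big)(\psi)=c_+\big((T^n\mu_+)(\psi)-\muinv{+}(\psi)\big)-c_-\big((T^n\mu_-)(\psi)-\muinv{+}(\psi)\big) .
\end{equation*}
Applying Theorem~\ref{thm:exp} to $\mu_+$ and $\mu_-$ — legitimate because $\psi$ has finite support, so $\dnorm{\psi}<\infty$ — taking the larger of the two resulting pairs of constants $C<\infty$, $\sigma<1$, and using $c_++c_-=\muinv{+}(|\varphi|)\leq\norm{\varphi}_\infty$, I would conclude
\begin{equation*}
\big|\muinv{+}(\varphi\, T^n\psi)-\muinv{+}(\varphi)\,\muinv{+}(\psi)\big|\leq(c_++c_-)\,C\,\dnorm{\psi}\,\sigma^n\leq C\,\norm{\varphi}_\infty\,\dnorm{\psi}\,\sigma^n ,
\end{equation*}
which is the assertion with $C_{\varphi,\psi}=C\,\norm{\varphi}_\infty\,\dnorm{\psi}$.

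The one nontrivial input is the fact invoked above that $\muinv{+}\in\brond{+}(K_0,\epsilon_0)$, i.e.\ the exponential decay in $\norm{\Lambda}$ of $\muinv{+}(\charf{-,\Lambda})$, which strengthens \eqref{muinvstabil}; I expect this to be the only real obstacle. It is delivered by the path-and-graph expansion underlying Theorem~\ref{thm:exp}: that expansion keeps $T^n\sleb{+}$ in $\brond{+}(K_0,\epsilon_0)$ uniformly in $n$, and $\brond{+}(K_0,\epsilon_0)$ is weak-* closed — because $\charf{-,\Lambda}$ is continuous, the map $\mu\mapsto\norm{\charf{-,\Lambda}\mu}$ is weak-* lower semicontinuous — so $\muinv{+}$, being a limit of Ces\`aro averages of the $T^k\sleb{+}$, belongs to it. This same ingredient is exactly what Corollary~\ref{coroll:1} requires, so it is naturally available by the time this theorem is reached.
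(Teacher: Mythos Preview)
Your duality argument is clean and would give a more direct route than the paper's, but the last paragraph contains a genuine gap. You need $\muinv{+}\in\brond{+}(K_0,\epsilon_0)$ with $\epsilon_0<\epsilon_*(\alpha)$, and you argue for it via ``the path-and-graph expansion underlying Theorem~\ref{thm:exp} keeps $T^n\sleb{+}$ in $\brond{+}(K_0,\epsilon_0)$ uniformly in $n$.'' The paper does \emph{not} establish this; in fact, in the proof of Theorem~\ref{thm:exp} it states explicitly: ``We don't know whether all considered models have the property that $T^k\sleb{+}$ belongs to some $\brond{+}(K,\epsilon')$.'' The Toom-graph machinery of Section~\ref{section:graphs} is tailored to a collection of negative spins lying along a \emph{path} in space-time --- the elements of $\gammamoins$ occur at pairwise distinct times --- and the injection proving \eqref{t:7} exploits that time ordering in an essential way: a discarded graph rooted at time $t$ is hit by an edge of a retained graph rooted at a strictly later time. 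The bound $\norm{\charf{-,\Lambda}T^k\sleb{+}}\leq K_0\epsilon_0^{|\Lambda|}$ would instead require controlling many negative spins at a \emph{single} time slice, and the combinatorics as developed here does not deliver that. Your claim that the Corollary on spatial decay already needs this same ingredient is also incorrect: its proof uses only the trivial membership $\sleb{+}\in\brond{+}(1,0)$.

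The paper circumvents the obstacle altogether. Rather than applying Theorem~\ref{thm:exp} as a black box to $\varphi_\pm\,\muinv{+}$, it approximates $\muinv{+}$ by $T^k\sleb{+}$, breaks $\muinv{+}(\varphi T^n\psi)-\muinv{+}(\varphi)\muinv{+}(\psi)$ into four pieces, and for the central term $T^k\sleb{+}\big((\varphi-T^k\sleb{+}(\varphi))\,T^n\psi\big)$ re-enters the path-and-graph expansion directly, extending the Toom graphs back to time $-k$ exactly as in the second half of the proof of Theorem~\ref{thm:exp}. The extra factor $\varphi-T^k\sleb{+}(\varphi)$ merely contributes $2\norm{\varphi}_\infty$. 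Sending $k\to\infty$ disposes of the remaining terms (whose bounds involve $\sigma^k$ times a quantity growing only polynomially in $n$). Thus the paper replaces your unavailable inclusion $\muinv{+}\in\brond{+}(K_0,\epsilon_0)$ by a direct expansion argument.
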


\section{Path Expansion} \label{section:path}
In this section, we will introduce a path expansion which is essentially equivalent to the Dobrushin criterion \cite{Do71}, using here a formalism which was originally introduced by Keller and Liverani for Coupled Map Lattices \cite{KeLi06}.

Let $\prec$ be any well-ordering of $\plan$.
The operator $\proj{x}: \mathcal{C}(X) \to \mathcal{C}(X)$ is defined as:
\begin{equation*}\label{d:2}
\proj{x}\varphi(\vect{\omega}) = \varphi(\vect{\omega}_{\succeq x}, \vect{a}_{\prec x}) -  \varphi(\vect{\omega}_{\succ x}, \vect{a}_{\preceq x}),
\end{equation*}
where, from now on,  $\vect{a}$ will denote the configuration for which $a_x = a =+1 $ for all $ x \in \plan$. $(\vect{\omega}_{\succeq x}, \vect{a}_{\prec x})$ is the configuration obtained from $\vect{\omega}$ by replacing the spins at all sites $y \prec x$ with the value $a$. Using telescopic sums, we can check that, for any continuous function $\varphi$,
\begin{equation}\label{d:3}
\varphi(\vect{\omega}) = \varphi(\vect{a}) + \sum_{x \in \plan} \proj{x} \varphi(\vect{\omega}).
\end{equation}
With a slight abuse of notation, let us denote by $\proj{x}: \mathcal{M}(X) \to \mathcal{M}(X)$ the dual of the operator $\proj{x}: \mathcal{C}(X) \to \mathcal{C}(X)$. The image of $\mathcal{M}(X)$ under this operator is actually included in the set
\begin{equation}\label{d:4}
\mathcal{M}_x = \{ \mu \in \mathcal{M}(X)\ |\ \mu(\varphi) = 0\ \textrm{if}\ \varphi(\vect{\omega})\ \textrm{is independent of}\ \omega_x \}
\end{equation}whose elements verify the following property:
\begin{equation}
\mu \in \mathcal{M}_x \quad    \Rightarrow \quad \norm{\mu(\varphi)} \leq \norm{\mu} \norm{\delta_x \varphi}_{\infty}.\label{d:7}
\end{equation}

With equation \eqref{d:3}, we can see that any signed measure of zero mass $\mu \in \mathcal{M}(X)$ with $\mu(1)=0$ admits the following decomposition:
\begin{equation}\label{d:8}
 \mu = \sum_{x \in \plan} \proj{x} \mu.
\end{equation}
While $\proj{x} \mu$ belongs to $\mathcal{M}_x$, it is no longer the case for $T \proj{x} \mu$.
Nevertheless, since the interactions are local, we will see that $T \proj{x} \mu$ can be expressed as the sum of $R$ signed measures: a first one in $\mathcal{M}_{x-u_1}$, a second one in $\mathcal{M}_{x-u_2}$, ... , and a last one in $\mathcal{M}_{x-u_R}$.
For this, consider an arbitrary measure $\mu_x \in \mathcal{M}_x$. Using definitions \eqref{re:3} and \eqref{d:4} of $T$ and $\mathcal{M}_x$, together with our hypothesis that $p(\xi_y| \vect{\omega})$ only depends on $\vect\omega_{y+U}$, it is easy to check that
\begin{equation} \label{path:1}
T \mu_x = \sum_{i=1}^R T^{(x-u_i,x)} \mu_x ,
\end{equation}
where $R$ new operators have been defined:
\begin{equation}\label{path:2}
T^{(x-u_i,x)} \mu (\varphi)  = \int \dif \vect{\omega} \int \dif \vect{\xi}\; \varphi(\vect{\xi}) \, \tau^{(x-u_i,x)} (\vect{\xi},\vect{\omega}) \, \mu(\vect{\omega})  \quad \text{for $i=1, \dotsc, R$}
\end{equation}
with the kernels
\begin{align}
\tau^{(x-u_i,x)} (\vect{\xi},\vect{\omega})&=  \left( \prod_{y \notin x-U} p(\xi_y| \vect{\omega}) \right) \left( \prod_{j=1}^{i-1} p(\xi_{x-u_j}| \vect{\omega}_{\neq x}, a) \right) \cdot \notag \\
& \quad \quad \cdot \Big( p(\xi_{x-u_i}| \vect{\omega}) - p(\xi_{x-u_i}| \vect{\omega}_{\neq x}, a) \Big) \left( \prod_{j=i+1}^{R} p(\xi_{x-u_j}| \vect{\omega}) \right).\label{path:6}
\end{align}
We notice that, for all $i$, the image of $\mathcal{M}(X)$ under the operator $T^{(x-u_i,x)}$ is included in $ \mathcal{M}_{x-u_i}$.

Let now $\mu$ be a signed measure of zero mass and consider $T^n \mu$. Using the decomposition \eqref{d:8} and applying \eqref{path:1} iteratively, we find
\begin{equation*}
T^n \mu= \sum_{x_0 \in \plan} \sum_{x_0-x_1 \in U } \cdots \sum_{x_{n-1}-x_n \in U } T^{(x_n ,x_{n-1})}\cdots T^{(x_1, x_{0})}\proj{x_0} \mu . \label{d:13}
\end{equation*}
This sum can be rewritten as a sum over paths. Indeed, if we introduce
\begin{equation*}
P_x=\left\{ \gamma: \{0,\ldots,n\} \to \plan \mid \gamma_n = x\ \text{and}\ \gamma_{t-1} - \gamma_t \in U \  \forall t \right\},
\end{equation*}
it is equivalent to the following compact expression:
\begin{equation}
T^n \mu= \sum_{x \in \plan}\ \sum_{ \gamma \in P_x}\ \prod_{t = 1}^{n} T^{(\gamma_t, \gamma_{t-1})} \proj{\gamma_0} \mu , \label{path:4}
\end{equation}
where the operators $T^{(\gamma_t, \gamma_{t-1})}$ have to be applied in chronological order.

In the case of a weakly interacting system, this sum over paths can be used to prove the existence of a unique invariant probability measure, under the assumptions of the Dobrushin criterion \cite{Do71}. The system we are considering here is certainly not weakly interacting. However, in order to prove Theorem \ref{thm:exp}, the idea will be to take advantage of the decoupling in the pure phases conveyed in Assumption (A\ref{assump:3}). This assumption will provide upper bounds of order $\alpha$ on the couplings $T^{(\gamma_t, \gamma_{t-1})}$ in the positive phase. Indeed, as $a=+1$ in \eqref{path:6}, those bounds will be obtained for the instants $t$ such that
the considered space-time configuration presents at time $t-1$ the value $+1$ for $\phi(\vect\omega_{\gamma_t+U})$. As for the negative phase, it will be shown to be infrequent enough, for a suitable choice of initial condition.
\section{Pure Phase Expansion} \label{section:pure}
For fixed $x \in \plan$ and $\gamma \in P_x$, if $\gamma$ denotes both the function and its trajectory \linebreak $\left\{ \left( t,\gamma_t  \right) \mid t=0, \ldots , n \right\}$, let us partition the trajectory into $\gammaplus \subseteq \gamma \setminus \{ \gamma_0 \}$ and $\gammamoins = (\gamma \setminus \{ \gamma_0 \} ) \setminus \gammaplus$. We define, for $t =0, \ldots , n $, the sets $F\left( \gammaplus , t \right) \subseteq X$ with the following indicator functions:
\begin{equation*}
\charf{F\left( \gammaplus , t \right)} (\vect{\omega}) =  \prod_{x:(t+1,x) \in \gammaplus} \charf{+} \left( \phi _x (\vect{\omega} ) \right) \prod_{x:(t+1,x) \in \gammamoins} \charf{-} \left( \phi_x (\vect{\omega} ) \right).
\end{equation*}
These subsets lead to a partition of the configuration space $X^{\{0,\dotsc , n \}}$. 
Inserting this partition in \eqref{path:4},
\begin{equation*}
T^n \mu = \sum_{x \in \plan}\ \sum_{ \gamma \in P_x} \sum_{\gammaplus \subseteq \gamma \setminus \{ \gamma_0 \}}  \prod_{t=1}^n \left( \charf{F(\gammaplus,t)}   T^{(\gamma_t,\gamma_{t-1})}  \right) \charf{F(\gammaplus,0)}  \proj{\gamma_0} \mu , \label{decoupling:1}
\end{equation*}
where $\charf{F(\gammaplus,n)}$ is nothing but the identity operator. Using property \eqref{d:7} together with the fact that the image of $\mathcal{M}(X)$ under $T^{(x,\gamma_{n-1})}$ is included in $\mathcal{M}_x$, we have, for all $\varphi \in \mathcal{C}(X)$,
\begin{equation}
\norm{T^n \mu(\varphi)}  \label{decoupling:2} \leq \sum_{x\in \plan}\ \sum_{ \gamma \in P_x} \sum_{\gammaplus \subseteq \gamma \setminus \{ \gamma_0 \}}  \norm{ \prod_{t=1}^n \left( \charf{F(\gammaplus,t)}   T^{(\gamma_t,\gamma_{t-1})}  \right) \charf{F(\gammaplus,0)}  \proj{\gamma_0} \mu} \norm{\delta_x \varphi}_{\infty} .
\end{equation}

Next, for given $x \in \plan$, $\gamma \in P_x$ and $\gammaplus \subseteq \gamma \setminus \{ \gamma_0\}$, we define the set
\begin{equation*}
\mathcal{E} \left( \gammaplus \right) = \left\{ \left( \vect{\omega}_t \right) _{t=0}^n \in X^{\{0,\ldots , n \}} \mid \vect{\omega}_{t} \in F \left( \gammaplus, t \right) \quad \forall t \in \{ 0, \dots, n \} \right\}.
\end{equation*}
We will now build a graph using the construction in \cite{To80} in order to control the extent of the negative phase. We will then rewrite the expansion \eqref{decoupling:2} as a cluster expansion in terms of a combination of paths and graphs.
\section{Graphs} \label{section:graphs}
As we included the Toom erosion condition in the hypotheses of Theorem 1, we know that the intersection of all convex hulls of plus sets of the considered MBT is empty. Using Lemma 12 of \cite{To80}, it implies the existence of an integer $q\in\nat_0$, of $q$ linear functions $L_1, \dotsc, L_q$ from $\real^{d+1}$ to $\real$ and of a positive $r$ such that the hypotheses of Theorem 1 of \cite{To80} are verified by the MBT and by the homogeneous space-time configuration with positive spins everywhere. In particular, we can use the structures which are defined in the Peierls argument proving this stability theorem in Section \MakeUppercase{\romannumeral 2} of \cite{To80}. We will not rewrite the whole construction here but only extract the objects of use for our purpose and their properties. We will also adapt the general formalism to the more restricted class of models under consideration here. We refer the reader to \cite{To80} for all details and, for an introduction to that very general proof, to the review given in Appendix A of \cite{LeMaSp90} in the particular case of the North-East-Center model. A good exercise could also be to consider the case of the Stavskaya model. One then obtains the contours described in Chapter 1 of \cite{DoKrTo90}.
   
In order to bound the probability of finding a negative spin at a given time $t_-$ and at a given site $x_-$ of the lattice, when the initial measure is $\sleb{+}$, \cite{To80} associates a structure called \textit{truss} to each space-time configuration presenting this negative spin. Its iterative construction is given in Section \MakeUppercase{\romannumeral 2}, Part $3$ of \cite{To80} where it is denoted by $\Pi$. This truss consists of an unordered sequence of \textit{edgers}, which are maps from $\{1, \dotsc, q \}$ to $ \{0, \ldots , t_- \} \times \plan$ whose image contains exactly two points. Here we use, much like in the review of \cite{LeMaSp90}, an equivalent reformulation. We view an edger as an unoriented edge that links two points of $ \{0, \ldots , t_- \} \times \plan$ and bears an extra attribute, namely a partition of the set $\{1, \dotsc, q \}$ of \textit{poles} between its two vertices. This natural bijection between edgers and edges extends to trusses which we treat as graphs on $ \{0, \ldots , t_- \} \times \plan$ composed of these edges corresponding to their edgers. We name them \textit{Toom graphs}.

The Toom graph $G$ thus associated to the space-time configuration identifies part of the errors which lead to the negative spin at $(t_-,x_-)$: its set of vertices $V_G$ admits only points where the given configuration presents negative spins and it contains $(t_-,x_-)$. It traces the source of this negative spin, via their propagation according to the deterministic rule conveyed by the function $\phi$, back to some previous errors (when $\phi_x ( \vect{\omega}_{t-1}  ) \allowbreak =+1$ but $\omega_{t,x}=-1$).
Such errors, classified in a distinguished subset $\hat{V}_G$ of $V_G$, all carry a probability smaller than $\epsilon$, because of Assumption (A\ref{assump:2}). $\hat{V}_G$ is noted $S$ in \cite{To80} and constructed iteratively parallel to the truss. The large number of different graphs associated to all possible space-time configurations with a negative spin at $(t_-,x_-)$ is then an `entropic' factor which has to be counterbalanced by an `energetic' factor, i.e. the probability of the graphs, which is bounded by the probability of making all the identified errors, $\epsilon^{|\hat{V}_G|}$. 

Now, using our reformulation of trusses in terms of Toom graphs, the construction algorithm of the trusses in Section \MakeUppercase{\romannumeral 2}, Part $3$ of \cite{To80} guarantees the following properties:
\begin{enumerate}[\hspace{1cm}(P1)]
\item \label{P1}The graphs are connected and contain $(t_-,x_-)$, which is their only vertex at time coordinate $t_-$. We will call it their \textit{origin}. Moreover, their edges are either \textit{arrows}, with a displacement vector between their vertices of the form $\pm (-1, u)$ for a $u$ in $U$, or \textit{forks}, with a displacement vector $(0,u^{(1)}-u^{(2)})$ where $u^{(1)}$ and $u^{(2)}$ are both elements of $U$. Therefore, taking into account the $2^q$ possible allocations of the $q$ poles to the two vertices of an edge, for any given point of $ \{0, \ldots , t_- \} \times \plan$ at most $2^q(R^2+2R)$ different types of edges can have that point as vertex.
For any such connected graph, there always exists a walk which starts from $(t_-,x_-)$, passes along every edge exactly twice and then comes back to its departure point. We choose such a walk and consider the sequence that records, at each of its steps, the displacement vector and the pole distribution of the travelled edge, and call it the \textit{Eulerian walk} associated to the graph. This correspondence is injective. Consequently, the number of different graphs grows only exponentially with their number of edges, $|E_G|$, as the number of Eulerian walks of length $2|E_G|$  is less than $[2^q(R^2+2R)]^{2|E_G|}$. This is Lemma $3$ of \cite{To80}.
\item \label{P2}The number $|E_G|$ of edges is in turn related to the number $|\hat{V}_G|$ of identified errors:
\begin{equation} \label{t:4}
\frac{1}{1+2qr^{-1}} \norm{E_{G}} +1 \leq \norm{\hat{V}_{G}}.
\end{equation}
This is due to a principle which presides over the graph construction in \cite{To80} so as to take advantage of the erosion property: the truss is \textit{overall even} and the number of forks equals the number of identified errors $|\hat{V}_G|$ minus one. Using Lemma $2$ of \cite{To80}, one deduces the upper bound \eqref{t:4}.
\end{enumerate}
Reference \cite{To80} uses these properties to prove that the `energetic' factor wins as long as $\epsilon$ is small enough. The probability of finding a negative spin at a given time and a given site can be made as small as desired by taking $\epsilon$ appropriately small.\\

Here in our treatment of the space-time configurations that present, for all $t$ such that $(t,\gamma_t) \in \gammamoins$, a negative value of the deterministic prescription given by $\phi$ evaluated at time $t-1$ in $\gamma_t+U$, we will hardly need to modify Toom's construction, but we will associate to each space-time configuration a collection of Toom graphs instead of only one, as we are interested in the source of a collection of negative spins, corresponding to the different elements of $\gammamoins$.

We define a map
\begin{align}
g : \mathcal{E} \left( \gammaplus \right) &\to \mathcal{G}  \left( \gammaplus \right) = g \left(\mathcal{E} \left( \gammaplus \right) \right) \label{t:2} \\
\left( \vect{\omega}_t \right) _{t=0}^n &\mapsto G = g \left( \left( \vect{\omega}_t \right) _{t=0}^n\right)\notag
\end{align}
where $G$ is a graph made of a disconnected collection of Toom graphs whose construction, for a given $\left( \vect{\omega}_t \right) _{t=0}^n \in \mathcal{E} \left( \gammaplus \right) $, consists in the following steps.
\begin{enumerate}
\item \begin{sloppypar} We pick $t_1$, the largest $t$ such that $\left( t_1, \gamma_{t_1} \right) \in \gammamoins$. We know that $\phi_{\gamma_{t_1}} \left( \vect{\omega}_{t_1-1}  \right) = -1$.
We consider $\left( \tilde{\vect{\omega}}_t \right) _{t=0}^n$ which is obtained from $ \left( \vect{\omega}_t \right) _{t=0}^n$ by replacing $\omega_{t_1,\gamma_{t_1}}$ with the value $-1$. We construct the Toom graph $G_1$ having the negative spin at $(t_1,\gamma_{t_1})$ as origin and which is associated to $\left( \tilde{\vect{\omega}}_t \right) _{t=0}^n$ according to the algorithm in Section \MakeUppercase{\romannumeral 2}, Part $3$ of \cite{To80}. The whole algorithm remains valid here, the only difference being that no assumption about the initial condition forbids the presence of negative spins at $t=0$ in $\tilde{\vect{\omega}}_0$ while, in \cite{To80}, the initial condition is $\sleb{+}$ and therefore the graph is always contained in $ \{1, \ldots , t_- \} \times \plan$ since the points where the configuration has positive spins can't be vertices. A rule is then added to the algorithm: when the graph under construction reaches negative spins at $t=0$, they are considered equivalent to errors and classed as elements of $\hat{V}_{G_1}$. In other words, this amounts only to a one-unit shift of the initial condition $\sleb{+}$ along the time axis.
\end{sloppypar} 

So we end up with a connected graph $G_1$. Its set of vertices $V_{G_1}$ contains $(t_1,\gamma_{t_1})$ and at all points $(t,x) \in V_{G_1} \setminus \{ (t_1,\gamma_{t_1}) \} $, the space-time configuration $\left( \vect{\omega}_t \right) _{t=0}^n$ takes the value $\omega_{t,x}=-1$. It has a distinguished subset of vertices $\hat{V}_{G_1} \subseteq V_{G_1}$ where the space-time configuration presents errors: 
\begin{equation*}
\forall (t,x) \in \hat{V}_{G_1} \  \textrm{such that} \  t\neq 0, \ \phi _x ( \vect{\omega}_{t-1}  ) \allowbreak =+1.
\end{equation*}
It has a set $E_{G_1}$ of edges connecting its vertices and satisfying \eqref{t:4}.

$G_1$ is the first of the Toom graphs whose union will form the graph $G$.
\item We pick the next maximum $t_2 < t_1$ such that $\left(t_2,\gamma_{t_2} \right) \in \gammamoins$. We perform exactly the same construction as before and associate to $\left( t_2,\gamma_{t_2} \right)$ the graph $G_2$ with properties analogous to those of $G_1$. If $V_{G_1} \cap V_{G_2} \neq \varnothing$ then we discard $G_2$, as we can't count any error twice. Otherwise the union of $G_1$ and $G_2$ will be part of $G$.
\item We repeat the same process up to the lowest time $t$ such that $\left( t,\gamma_t \right) \in \gammamoins$, discarding any Toom graph $G_i$ which intersects one of the previous retained ones.
\item $G$ is defined as the union of all the retained $G_i$, $i \in \{ 1, \dots , | \gammamoins | \}$. $V_G$ is the union of the retained $V_{G_i}$, $\hat{V}_G$ is the union of the retained $\hat{V}_{G_i}$ and $E_G$ is the union of the retained $E_{G_i}$.
\end{enumerate}
The map $g$ of \eqref{t:2} is then completely defined. It is usually not injective, as there are many configurations corresponding to one graph. But $ \mathcal{E} \left( \gammaplus \right)$ can always be written as:
\begin{equation}\label{p:2}
\mathcal{E} \left( \gammaplus \right) = \bigcup_{G \in \mathcal{G}\left( \gammaplus \right)} g^{-1}G.
\end{equation}
Defining, for all $G \in \mathcal{G} \left( \gammaplus \right)$ and for $t=0, \ldots, n$, the subsets $E(G,t) \subseteq X$:
\begin{equation} \label{toom:6}
\charf{E(G,t)}(\vect{\omega}) = \prod_{x: (t,x) \in V_G \setminus \gammamoins} \charf{-}(\omega_x) \prod_{x: (t+1,x) \in \hat{V}_G } \charf{+}\left( \phi _x(\vect{\omega}) \right),
\end{equation}
we have, by construction of the map $g$,
\begin{equation}
\charf{g^{-1}G} \left( \left( \vect{\omega}_t \right) _{t=0}^n \right) \leq \prod_{t=0}^n \charf{E(G,t)}(\vect{\omega}_t). \label{toom:7}
\end{equation}

Let us consider any $G \in \mathcal{G} \left( \gammaplus \right)$. $G$ is the union of $c$ individually connected but pairwise disconnected Toom graphs $(G_{i_1}, \dots , G_{i_c})$. From this point on, they will be noted $(G_{1}, \dots, G_i, \dots , G_{c})$. They all possess Properties (P\ref{P1}) and (P\ref{P2}). We show that $G$ itself inherits similar properties.

First we find an upper bound on the number of different graphs with given numbers of edges and of connected parts.
We define a map $w$ on $\mathcal{G} \left( \gammaplus \right)$. For $G \in \mathcal{G} \left( \gammaplus \right)$, $w(G)$ consists of two elements: first, the list of the origins of $G_1, \dots , G_c$; second, the sequence given by the concatenation of the Eulerian walks associated to $G_1, \dots , G_c$ in Property (P\ref{P1}).
This map $w:\mathcal{G} \left( \gammaplus \right) \to w\left( \mathcal{G} \left( \gammaplus \right) \right)$ is injective.
\begin{proof}
The induction procedure described in Section \MakeUppercase{\romannumeral 2}, Part $3$ of \cite{To80} for the construction of trusses starts from the vertex which we called the origin of the graph. One can check by inspection of the induction steps that the first step creates exactly $q$ arrows leaving from this origin and then no other edge with this vertex will be drawn during the following steps of the construction. Consequently, we know that the Eulerian walk associated to a Toom graph $G_i$ will leave from and come back to the origin of $G_i$ exactly $q$ times. Given any element of the set $w\left( \mathcal{G} \left( \gammaplus \right) \right)$, its unique inverse image can then be deduced from the list of origins and the sequence of steps of the concatenated Eulerian walks by the following method. Starting from the first origin recorded in the list, we add the steps of the sequence and redraw the first connected part of the graph, until the origin has been reached $q$ times. Then we jump to the next origin recorded in the list and read on the sequence of steps until we again come back $q$ times to this second departure point, and so on, until we have read the whole sequence. While redrawing a graph, we give back each travelled edge its pole distribution, which is recorded in the sequence.
\end{proof}
For all $c \in \nat$ and all $m \in \nat$, let $\mathcal{G} \left( \gammaplus, c, m \right)$ be the subset of $\mathcal{G} \left( \gammaplus \right)$ consisting of the graphs with $c$ connected parts and $m$ edges exactly.
Any element of $w\left( \mathcal{G} \left( \gammaplus, c, m \right) \right)$ is made of a list with $c$ origins chosen among the elements of $\gammamoins$ and of a sequence with exactly $2m$ steps.
Therefore Property (P\ref{P1}) is transferred from Toom graphs to graphs $G \in \mathcal{G} \left( \gammaplus, c, m \right)$:
\begin{equation} \label{t:9}
\norm{\mathcal{G} \left( \gammaplus, c, m \right)} \leq {\norm{\gammamoins} \choose c } . [2^q(R^2+2R)]^{2m} \leq 2^{\norm{\gammamoins}}. [2^q(R^2+2R)]^{2m}.
\end{equation}

On the other hand, every connected part verifies Property (P\ref{P2}), that is to say, satisfies the analog of \eqref{t:4}. Summing this inequality over all parts gives a similar property for $G$:
\begin{equation} \label{t:8}
\frac{1}{1+2qr^{-1}} \norm{E_{G}} +c \leq \norm{\hat{V}_{G}}.
\end{equation}

We also have
\begin{equation} \label{t:7}
\norm{\gammamoins} \leq \norm{E_{G}} +c.
\end{equation}
\begin{proof}
It is sufficient to construct an injective map $f:\gammamoins \to E_G \cup \{1, \dots, c\}$. Let us consider any $(t,\gamma_t) \in \gammamoins$. If $(t,\gamma_t)$ belongs to a connected part $G_i$ of $G$ which is contained in $\{0, \dots, t \} \times \plan$, then we know that $(t,\gamma_t)$ is the unique origin of $G_i$. We assign to $(t,\gamma_t)$ the image $f((t,\gamma_t))=i \in \{1, \dots, c\}$. Otherwise, we know that the Toom graph with origin $(t,\gamma_t)$ has been discarded. Then that Toom graph intersects at least one of the retained Toom graphs with origin $(s,\gamma_s)$, $s>t$. Now the discarded graph is contained in $ \{0, \dots , t \} \times \plan$ and the time component of edges of Toom graphs has maximum absolute value $1$, as can be seen in the definition of arrows and forks in Property (P\ref{P1}). Therefore there exists an edge of the retained graph which arrives onto some point $(t,x)$, $ x \in \plan$. Such an edge belongs to $E_G$ and we take it to be the image of $(t,\gamma_t)$ under $f$. The map $f$ thus defined is easily seen to be injective.
\end{proof}
\section{Exponential Convergence to Equilibrium} \label{section:decouplpure}
\begin{sloppypar}
We now combine the collections of Toom graphs introduced in Section \ref{section:graphs} with the paths of influence of Section \ref{section:path} and their pure phase partition of Section \ref{section:pure}. Inserting partition \eqref{p:2} in expansion \eqref{decoupling:2} yields a sum over graphs.
It introduces intricate couplings between the configurations at different times, due to the complexity of the Toom graph construction algorithm in Section \MakeUppercase{\romannumeral 2}, Part 3 of \cite{To80}. Since $\charf{g^{-1}G}$ does not have the property of factorization over time, we use the upper bound \eqref{toom:7} in the equivalent form $\charf{g^{-1}G} \left( \left( \vect{\omega}_t \right) _{t=0}^n \right) = \charf{g^{-1}G} \left( \left( \vect{\omega}_t \right) _{t=0}^n \right) . \prod_{t=0}^n \charf{E(G,t)}(\vect{\omega}_t) $, as indicator functions can only take values $0$ or $1$. Now, for all $\left( \vect{\omega}_t \right) _{t=1}^n$, we have
\begin{align}
&\norm{\proj{\gamma_0} \mu \left[ \tau^{(\gamma_1, \gamma_0 ) } (\vect{\omega}_1,\cdot ) \ \charf{F(\gammaplus,0)\cap E(G,0)} (\cdot) \ \charf{g^{-1}G} \left( \, \cdot \, ; \left( \vect{\omega}_t \right) _{t=1}^{n} \right) \right]} \notag \\
& \qquad \leq \norm{\proj{\gamma_0} \mu} \left[ \norm{\tau^{(\gamma_1, \gamma_0 ) } (\vect{\omega}_1,\cdot )} \charf{F(\gammaplus,0)\cap E(G,0)} (\cdot) \right] , \notag
\end{align}
where $ \norm{\proj{\gamma_0} \mu}$ is not the norm, but the absolute value of the measure $\proj{\gamma_0} \mu$. Consequently, if we define the operator $\tilde{T}^{(y,x)} $ by replacing $\tau^{(y,x)} (\vect{\xi},\vect{\omega})$ with its absolute value $\norm{\tau^{(y,x)} (\vect{\xi},\vect{\omega})}$ in \eqref{path:2}, the graph expansion gives
\begin{align}
&\norm{ \prod_{t=1}^n \left( \charf{F(\gammaplus,t)}   T^{(\gamma_t,\gamma_{t-1})}  \right) \charf{F(\gammaplus,0)}  \proj{\gamma_0} \mu} \label{graph:1} \\
& \qquad \leq \sum_{G \in \mathcal{G} \left( \gammaplus \right)} \norm{ \prod_{t=1}^n \left( \charf{F(\gammaplus,t)\cap E(G,t)}   \tilde{T}^{(\gamma_t,\gamma_{t-1})}  \right) \charf{F(\gammaplus,0)\cap E(G,0)}  \norm{\proj{\gamma_0} \mu}}. \notag
\end{align}
\end{sloppypar}
This expansion will be the starting point of the proof of Theorem \ref{thm:exp}. Before, we need the following lemmas.
\begin{lemma} \label{lemma1}
For all $x \in \plan$, $\gamma \in P_x$, $\gammaplus \subseteq \gamma \setminus \{ \gamma_0\}$ and $G \in \mathcal{G} \left( \gammaplus \right)$, for all $\nu \in \mathcal{M}(X)$ and for all $t \in \{1, \dots , n\}$,
\begin{align*}
&\grandnorm{ \charf{F(\gammaplus,t)\cap E(G,t)}  \tilde{T}^{(\gamma_t,\gamma_{t-1})}  \charf{F(\gammaplus,t-1)\cap E(G,t-1)}  \nu } \notag \\
& \qquad \leq \norm{\charf{F(\gammaplus,t-1)\cap E(G,t-1)} \nu} \ \epsilon^{|\hat{V}_{G,t}|} \ \alpha^{|\gammaplus_t|} \ 2^{|\gammamoins_t|},
\end{align*}
where we introduced the notation $\gammaplus_t= \gammaplus \cap \{(t,x) | x \in \plan \}$ and the analogs $\gammamoins_t$ and $\hat{V}_{G,t}$.
\end{lemma}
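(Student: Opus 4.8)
The plan is to reduce the estimate to a pointwise (in $\vect{\omega}$) bound on a single integral over $\vect{\xi}$ of the kernel $\norm{\tau^{(\gamma_t,\gamma_{t-1})}}$, and then to read off the three factors $\epsilon$, $\alpha$, $2$ site by site from Assumptions (A\ref{assump:2}), (A\ref{assump:3}) and from the monotonicity of $\phi$. Write $A_s:=F(\gammaplus,s)\cap E(G,s)$ for $s\in\{t-1,t\}$ and $\gamma_t=\gamma_{t-1}-u_i$. Unfolding \eqref{path:2}, \eqref{path:6} and the definition of the operators $\charf{Y}$, using $\norm{\varphi\,\charf{A_t}}_\infty\le1$, and replacing $\nu$ by its total variation $\norm{\nu}$ (as with $\norm{\proj{\gamma_0}\mu}$ in \eqref{graph:1}), one gets for all $\varphi$ with $\norm{\varphi}_\infty\le1$
\begin{equation*}
\grandnorm{\charf{A_t}\,\tilde{T}^{(\gamma_t,\gamma_{t-1})}\,\charf{A_{t-1}}\,\nu\,(\varphi)}\ \le\ \int\dif\vect{\omega}\;\charf{A_{t-1}}(\vect{\omega})\ \Big(\int\dif\vect{\xi}\;\charf{A_t}(\vect{\xi})\,\norm{\tau^{(\gamma_t,\gamma_{t-1})}(\vect{\xi},\vect{\omega})}\Big)\,\norm{\nu}(\vect{\omega}).
\end{equation*}
Since $\int\dif\vect{\omega}\,\charf{A_{t-1}}(\vect{\omega})\,\norm{\nu}(\vect{\omega})=\norm{\charf{A_{t-1}}\nu}$, it then suffices to show that the inner integral is $\le\epsilon^{|\hat{V}_{G,t}|}\,\alpha^{|\gammaplus_t|}\,2^{|\gammamoins_t|}$ for every $\vect{\omega}$ with $\charf{A_{t-1}}(\vect{\omega})=1$.

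Fix such an $\vect{\omega}$. From \eqref{path:6}, write $\norm{\tau^{(\gamma_t,\gamma_{t-1})}(\vect{\xi},\vect{\omega})}=\norm{q_{\gamma_t}(\xi_{\gamma_t})}\prod_{y\ne\gamma_t}q_y(\xi_y)$, where $q_{\gamma_t}(\xi_{\gamma_t})=p(\xi_{\gamma_t}|\vect{\omega})-p(\xi_{\gamma_t}|\vect{\omega}_{\ne\gamma_{t-1}},a)$ and, for $y\ne\gamma_t$, $q_y(\cdot)$ is $p(\cdot|\vect{\omega}_{\ne\gamma_{t-1}},a)$ when $y=\gamma_{t-1}-u_j$ with $j<i$ and $p(\cdot|\vect{\omega})$ otherwise; in particular $q_y\ge0$ and $\sum_{\xi_y}q_y(\xi_y)=1$ for $y\ne\gamma_t$. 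In $\charf{A_t}(\vect{\xi})$ I would bound the $\charf{F(\gammaplus,t)}$-factor and the factors $\charf{+}(\phi_x(\vect{\xi}))$ with $(t+1,x)\in\hat{V}_G$ by $1$, keeping only $\prod_{x:(t,x)\in V_G\setminus\gammamoins}\charf{-}(\xi_x)$; the remaining integral then factorises over sites. A site $y\ne\gamma_t$ not forced to $-1$ contributes $\sum_{\xi_y}q_y(\xi_y)=1$; a forced site $y\ne\gamma_t$ contributes $q_y(-1)\le1$, and contributes $q_y(-1)\le\epsilon$ whenever $(t,y)\in\hat{V}_G$: the $\charf{+}(\phi_y(\,\cdot\,))$-part of $\charf{E(G,t-1)}$ forces $\phi_y(\vect{\omega})=+1$, hence also $\phi_y(\vect{\omega}_{\ne\gamma_{t-1}},a)=+1$ by monotonicity of $\phi$ (raising $\omega_{\gamma_{t-1}}$ to the maximal value $a=+1$), so $\xi_y=-1$ differs from the deterministic prescription and Assumption (A\ref{assump:2}) applies to the corresponding local probability $q_y$.

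It remains to treat the site $\gamma_t$ along which the path advances. As $t\ge1$, exactly one of $(t,\gamma_t)\in\gammaplus$, $(t,\gamma_t)\in\gammamoins$ holds, and $\charf{F(\gammaplus,t-1)}$ forces $\phi_{\gamma_t}(\vect{\omega})=+1$ in the first case and $\phi_{\gamma_t}(\vect{\omega})=-1$ in the second. If $(t,\gamma_t)\in\gammamoins$, then $(t,\gamma_t)\notin\hat{V}_G$ (otherwise the constraints from $F(\gammaplus,t-1)$ and $E(G,t-1)$ would be incompatible on $A_{t-1}$), and the crude bound $\norm{q_{\gamma_t}(\xi_{\gamma_t})}\le p(\xi_{\gamma_t}|\vect{\omega})+p(\xi_{\gamma_t}|\vect{\omega}_{\ne\gamma_{t-1}},a)$ gives $\sum_{\xi_{\gamma_t}}\norm{q_{\gamma_t}(\xi_{\gamma_t})}\le2$, yielding the asserted bound with $|\gammaplus_t|=0$ and $|\gammamoins_t|=1$. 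If $(t,\gamma_t)\in\gammaplus$, then $a=+1=\phi_{\gamma_t}(\vect{\omega})$, so Assumption (A\ref{assump:3}) (with the distinguished site taken to be $\gamma_{t-1}$) gives $\norm{q_{\gamma_t}(\xi_{\gamma_t})}\le\alpha\,p(\xi_{\gamma_t}|\vect{\omega})$; when $(t,\gamma_t)\notin V_G\setminus\gammamoins$ this sums to $\le\alpha$, while when $(t,\gamma_t)\in V_G\setminus\gammamoins$ the value $\xi_{\gamma_t}=-1$ is forced, so $(t,\gamma_t)$ is an error of $G$ — hence counted in $|\hat{V}_{G,t}|$ — and $\norm{q_{\gamma_t}(-1)}\le\alpha\,p(-1|\vect{\omega})\le\alpha\epsilon$ by (A\ref{assump:3}) and (A\ref{assump:2}); either way the product of all per-site factors is $\le\alpha\,\epsilon^{|\hat{V}_{G,t}|}$, the asserted bound with $|\gammaplus_t|=1$ and $|\gammamoins_t|=0$. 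Collecting the cases proves the inner bound, hence the lemma.

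The genuine difficulty is just the bookkeeping: matching the product structure of the kernel \eqref{path:6} to the constraint sets $F(\gammaplus,t-1)$, $E(G,t-1)$, $F(\gammaplus,t)$, $E(G,t)$ so that no factor is lost or counted twice. The delicate points are that an error of $G$ at time $t$ pays a factor $\epsilon$ only thanks to the prescription $\phi_x(\vect{\omega})=+1$ at time $t-1$ carried by $E(G,t-1)$; that monotonicity of $\phi$ is indispensable whenever the relevant local probability involves the modified configuration $(\vect{\omega}_{\ne\gamma_{t-1}},a)$; and that the single site $\gamma_t$ may simultaneously belong to $\gammaplus$ and be an error of $G$, in which case it must furnish both a factor $\alpha$ and one of the powers of $\epsilon$.
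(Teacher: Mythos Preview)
Your proof is correct and follows essentially the same route as the paper's: reduce to a pointwise bound on the $\vect{\xi}$-integral for $\vect{\omega}\in F(\gammaplus,t-1)\cap E(G,t-1)$, factorise over sites, and read off the $\epsilon$, $\alpha$, $2$ factors from (A\ref{assump:2}), (A\ref{assump:3}) and the trivial bound respectively. You are in fact more explicit than the paper on two points it leaves implicit --- that monotonicity of $\phi$ is needed to get $\phi_y(\vect{\omega}_{\ne\gamma_{t-1}},a)=+1$ for the modified-configuration factors, and that the case $(t,\gamma_t)\in\gammamoins\cap\hat V_G$ is vacuous because $A_{t-1}=\varnothing$; your aside that $(t,\gamma_t)\in V_G\cap\gammaplus$ forces $(t,\gamma_t)\in\hat V_G$ is true by the graph construction but is not actually needed, since the resulting factor $\alpha\epsilon$ already dominates $\alpha\,\epsilon^{|\hat V_{G,t}|}$ times the remaining $\epsilon$-factors whether or not $(t,\gamma_t)$ is counted in $\hat V_{G,t}$.
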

\begin{proof}
Using definitions \eqref{path:2} and \eqref{toom:6} of $T^{(\gamma_t, \gamma_{t-1})}$ and $\charf{E(G,t)}$, we already have
\begin{align}
&\grandnorm{ \charf{F(\gammaplus,t)\cap E(G,t)}  \tilde{T}^{(\gamma_t,\gamma_{t-1})}  \charf{F(\gammaplus,t-1)\cap E(G,t-1)}  \nu } \notag \\
&\quad \leq \norm{\charf{F(\gammaplus,t-1)\cap E(G,t-1)} \nu} \cdot \label{together:1}  \\
& \qquad \qquad \cdot \sup_{\vect{\omega} \in F(\gammaplus,t-1)\cap E(G,t-1)} \int \dif \vect{\xi} \ \prod_{x: (t,x) \in V_G \setminus \gammamoins} \charf{-}\left(\xi_{x}\right) \norm{\tau^{(\gamma_t,\gamma_{t-1})} (\vect{\xi}, \vect{\omega} )}. \notag
\end{align}
Now, keeping definition \eqref{path:6} of $\tau^{(\gamma_t,\gamma_{t-1})}$ in mind, the contributions of the spins $\xi_{x}$ at different sites $x \in \plan$ to the integral in \eqref{together:1} are decoupled and factorize.
We first consider the set $\{x \in \plan :(t,x) \in \hat{V}_G \setminus \gamma \} $. Because of Assumption (A\ref{assump:2}), its elements all contribute by a factor bounded by $\epsilon$, since the supremum in \eqref{together:1} is taken over configurations $\vect{\omega}$ in $E(G,t-1)$.
Everywhere else on $ \plan \setminus  \{ \gamma_t \} $, the contribution is trivially bounded by $1$. For $x=  \gamma_t$, we need upper bounds on $\norm{p\left( \xi_{\gamma_t} | \vect{\omega}\right) -p\left( \xi_{\gamma_t} | \vect{\omega}_{\neq \gamma_{t-1}}, a \right) }$:
\begin{itemize}
\item if $\phi _{\gamma_t}( \vect{\omega}) = +1$,
\begin{itemize}
\item and if $ \xi_{\gamma_t}= -1$,
$\norm{p\left( \xi_{\gamma_t} | \vect{\omega}\right) -p\left( \xi_{\gamma_t} | \vect{\omega}_{\neq \gamma_{t-1}}, a \right) } \leq \alpha \ p\left( \xi_{\gamma_t} | \vect{\omega}\right) \leq \alpha \ \epsilon ; $
\item and if $  \xi_{\gamma_t}= +1$,
$\norm{p\left( \xi_{\gamma_t} | \vect{\omega}\right) -p\left( \xi_{\gamma_t} | \vect{\omega}_{\neq \gamma_{t-1}}, a \right) }\leq \alpha \ p\left( \xi_{\gamma_t} | \vect{\omega}\right) , $
\end{itemize}
where we used Assumptions (A\ref{assump:2}) and (A\ref{assump:3});
\item if $\phi _{\gamma_t}( \vect{\omega})= -1$, we use the trivial bound $\norm{p\left( \xi_{\gamma_t} | \vect{\omega}\right) -p\left( \xi_{\gamma_t} | \vect{\omega}_{\neq \gamma_{t-1}}, a \right) } \leq 1$.
\end{itemize}
Therefore we obtain the following three types of upper bounds on the contribution of the spin at $\gamma_t$ to the integral in \eqref{together:1}: the contribution is bounded by $\alpha \ \epsilon$ if $(t,\gamma_t) \in \gammaplus \cap \hat{V}_G$, by $\alpha$ if $(t,\gamma_t) \in \gammaplus \setminus \hat{V}_G$ or by $2$ if $(t,\gamma_t) \in \gammamoins$.
Inserting all these bounds in \eqref{together:1} proves Lemma \ref{lemma1}.
\end{proof}
\begin{lemma} \label{lemma:2}
For any MBT without memory characterized by a non-constant monotonic function $\phi$ verifying the Toom erosion condition, there exist $\alpha_* >0$ and a positive and strictly decreasing function $\epsilon_*$ defined on $[0,\alpha_*[$ such that, for all $\alpha \in [0, \alpha_* [$ and $\epsilon \in [0, \epsilon_* (\alpha)[$, the following result is verified for any PCA satisfying the corresponding Assumptions (A\ref{assump:2}) and (A\ref{assump:3}). Let $K\geq 0$ and $\epsilon' \in [ 0, \epsilon_* (\alpha) [$. Let $\mu \in \mathcal{M}(X)$ be such that its absolute value $|\mu|$ is in the class $\brond{+} ( K , \epsilon  ')$. Then, there exist some constants $C< \infty$ and $\sigma<1$ such that, for all $\varphi \in \mathcal{C}(X)$ and all $n \in \nat$,
\begin{equation*}
\sum_{x \in \plan} \sum_{ \gamma \in P_x}  \sum_{\gammaplus \subseteq \gamma \setminus \{ \gamma_0 \}}  \norm{ \prod_{t=1}^n \left( \charf{F(\gammaplus,t)}   T^{(\gamma_t,\gamma_{t-1})}  \right) \charf{F(\gammaplus,0)}  \proj{\gamma_0} \mu} \norm{\delta_x \varphi}_{\infty} \leq C  \dnorm{\varphi} \sigma^n  .
\end{equation*}
\end{lemma}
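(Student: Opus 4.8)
The plan is to start from the graph expansion \eqref{graph:1} and to estimate each of its summands by iterating Lemma~\ref{lemma1} along the time axis. Applying Lemma~\ref{lemma1} successively for $t=n,n-1,\dots,1$ with $\nu=\charf{F(\gammaplus,0)\cap E(G,0)}\norm{\proj{\gamma_0}\mu}$, and using that the operators $\charf{F(\gammaplus,t)\cap E(G,t)}$ are idempotent, one gets for every $G\in\mathcal{G}(\gammaplus)$
\begin{equation*}
\norm{\prod_{t=1}^n\left(\charf{F(\gammaplus,t)\cap E(G,t)}\tilde{T}^{(\gamma_t,\gamma_{t-1})}\right)\charf{F(\gammaplus,0)\cap E(G,0)}\norm{\proj{\gamma_0}\mu}}\leq\norm{\charf{F(\gammaplus,0)\cap E(G,0)}\norm{\proj{\gamma_0}\mu}}\;\epsilon^{|\hat V_G|-|\hat V_{G,0}|}\,\alpha^{|\gammaplus|}\,2^{|\gammamoins|},
\end{equation*}
where I used $\sum_{t=1}^n|\hat V_{G,t}|=|\hat V_G|-|\hat V_{G,0}|$, $\sum_{t=1}^n|\gammaplus_t|=|\gammaplus|$ and $\sum_{t=1}^n|\gammamoins_t|=|\gammamoins|$.

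Next I would bound the ``base'' factor. One has $\proj{\gamma_0}\mu(\varphi)=\mu(\proj{\gamma_0}\varphi)$, and $\proj{\gamma_0}\varphi$ depends only on the spins at sites $\succeq\gamma_0$, vanishes unless $\omega_{\gamma_0}=-1$, and satisfies $\norm{\proj{\gamma_0}\varphi}_\infty\leq2\norm{\varphi}_\infty$; on the other hand $\charf{E(G,0)}$ is supported on the configurations with $\omega_x=-1$ at every $x$ such that $(0,x)\in V_G$. A short estimate using $|\mu|\in\brond{+}(K,\epsilon')$ then yields
\begin{equation*}
\norm{\charf{F(\gammaplus,0)\cap E(G,0)}\norm{\proj{\gamma_0}\mu}}\leq 2K\,\epsilon'^{\max(|V_{G,0}|,1)},\qquad V_{G,0}:=\{x\mid(0,x)\in V_G\}.
\end{equation*}
Since $\hat V_{G,0}\subseteq V_{G,0}$ and $\epsilon,\epsilon'<\epsilon_*(\alpha)\leq1$, writing $\epsilon_0=\max(\epsilon,\epsilon')$ this merges with the previous display into the single bound $2K\,\epsilon_0^{|\hat V_G|}\,\alpha^{|\gammaplus|}\,2^{|\gammamoins|}$ for each $G$.

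The remaining work is purely combinatorial: carry out the sums over $G$, then over $\gammaplus$, $\gamma$ and $x$. For the sum over $G\in\mathcal{G}(\gammaplus)$ I would split $|\hat V_G|=\tfrac12|\hat V_G|+\tfrac12|\hat V_G|$ and control the two halves by the two consequences of Toom's erosion inequalities, namely $|\hat V_G|\geq|E_G|/\beta+c$ with $\beta=1+2qr^{-1}$ (this is \eqref{t:8}) and $|\hat V_G|\geq|\gammamoins|/\beta$ (from \eqref{t:7} combined with \eqref{t:8}); together with the entropy bound \eqref{t:9} this gives
\begin{equation*}
\sum_{G\in\mathcal{G}(\gammaplus)}\epsilon_0^{|\hat V_G|}\leq\epsilon_0^{|\gammamoins|/(2\beta)}\Bigl(\sum_{c\geq1}\binom{|\gammamoins|}{c}\epsilon_0^{c/2}\Bigr)\Bigl(\sum_{m\geq0}[2^q(R^2+2R)]^{2m}\epsilon_0^{m/(2\beta)}\Bigr)\leq M(\epsilon_0)\,\lambda(\epsilon_0)^{|\gammamoins|},
\end{equation*}
a finite bound for $\epsilon_0$ small, with $M(\epsilon_0)$ uniformly bounded and $\lambda(\epsilon_0)\to0$ as $\epsilon_0\to0$. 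Because each $\gamma\in P_x$ visits exactly one site per time step, $|\gammaplus_t|,|\gammamoins_t|\in\{0,1\}$ and $|\gammaplus|+|\gammamoins|=n$, so the sum over $\gammaplus\subseteq\gamma\setminus\{\gamma_0\}$ of $\alpha^{|\gammaplus|}(2\lambda(\epsilon_0))^{|\gammamoins|}$ factorizes over the $n$ steps and equals $(\alpha+2\lambda(\epsilon_0))^n$; the sum over $\gamma\in P_x$ contributes a factor $R^n$ (there are $R$ admissible $\gamma_{t-1}$ for each $\gamma_t$); and $\sum_{x\in\plan}\norm{\delta_x\varphi}_\infty=\dnorm{\varphi}$. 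Assembling everything gives the asserted inequality with $C=2KM(\epsilon_0)$ and $\sigma=R\bigl(\alpha+2\lambda(\epsilon_0)\bigr)$, which is $<1$ once $\alpha_*\leq1/R$ and $\epsilon_*$ is chosen positive, strictly decreasing, and small enough that both the above series converges and $2R\lambda(\epsilon_*(\alpha))<1-R\alpha$.

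The step I expect to be the main obstacle is this last, quantitative one: the single geometric gain $\epsilon_0^{|\hat V_G|}$ has to dominate simultaneously the graph entropy \eqref{t:9}, the factor $2^{|\gammamoins|}$ produced by Lemma~\ref{lemma1} at the negative steps, and still leave enough room for the path multiplicity $R^n$. This is exactly where the quantitative relations \eqref{t:4}, \eqref{t:7} and \eqref{t:8} tying $|E_G|$, $c$ and $|\gammamoins|$ to $|\hat V_G|$, which encode Toom's erosion property, are indispensable, and it is what forces the joint, interdependent choice of $\alpha_*$ and of the decreasing function $\epsilon_*$.
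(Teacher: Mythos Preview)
Your argument is correct and follows the same route as the paper: iterate Lemma~\ref{lemma1} along the path, absorb the initial-time factor using $|\mu|\in\brond{+}(K,\epsilon')$, then control $\sum_G\epsilon_0^{|\hat V_G|}$ via \eqref{t:8}, \eqref{t:7} and \eqref{t:9} and finish with Newton's binomial over $\gammaplus$ and the count $|P_x|=R^n$. The only cosmetic difference is the bookkeeping of the graph sum: the paper parametrizes $|E_G|=|\gammamoins|-c+k$ with $k\ge0$ (which already encodes \eqref{t:7}) and sums over $(c,k)$, whereas you split $\epsilon_0^{|\hat V_G|}$ in two halves to extract the $|\gammamoins|$-dependence; this yields slightly looser constants but the same $\alpha_*=1/R$ and a decreasing $\epsilon_*$. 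One tiny slip: your sum over $c$ should start at $c\ge0$ (or treat $|\gammamoins|=0$ separately), since the empty graph has $c=0$ and contributes $1$ to $\sum_G\epsilon_0^{|\hat V_G|}$; with $c\ge0$ the factor becomes $(1+\epsilon_0^{1/2})^{|\gammamoins|}$ and the rest of your estimate goes through unchanged.
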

\begin{proof}
Multiple uses of Lemma \ref{lemma1} on the RHS of \eqref{graph:1} imply
\begin{align*}
&\norm{ \prod_{t=1}^n \left( \charf{F(\gammaplus,t)\cap E(G,t)}   \tilde{T}^{(\gamma_t,\gamma_{t-1})}  \right) \charf{F(\gammaplus,0)\cap E(G,0)}  \norm{\proj{\gamma_0} \mu}}\notag \\
&\qquad \qquad \leq \epsilon^{\sum_{t=1}^n |\hat{V}_{G,t}|} \ \alpha^{ |\gammaplus|} \ 2^{|\gammamoins|}  \norm{ \charf{F(\gammaplus,0)\cap E(G,0)}  \norm{\proj{\gamma_0} \mu}}.
\end{align*}
The important point is now that we assumed that $|\mu|$ belongs to $\brond{+}(K,\epsilon')$. Indeed, since $\charf{E(G,0)} \leq \charf{-,\hat{V}_{G,0}}$ and $\charf{-,\hat{V}_{G,0}}( \vect{\omega}_{\succeq \gamma_0}, \vect{a}_{\prec \gamma_0} ) \leq \charf{-,\hat{V}_{G,0}} ( \vect{\omega})$, and using the definition of $\proj{\gamma_0}$, we obtain
\begin{equation*}
\norm{ \charf{F(\gammaplus,0)\cap E(G,0)}  \norm{\proj{\gamma_0} \mu}} \leq 2  K \epsilon'^{\norm{\hat{V}_{G,0}}}.
\end{equation*}
Consequently, with $\tilde{\epsilon}= \max \{ \epsilon, \epsilon' \} $,
\begin{equation}
\norm{ \prod_{t=1}^n \left( \charf{F(\gammaplus,t)\cap E(G,t)}   \tilde{T}^{(\gamma_t,\gamma_{t-1})}  \right) \charf{F(\gammaplus,0)\cap E(G,0)}  \norm{\proj{\gamma_0} \mu}} \leq 2 K \  \tilde{\epsilon}^{|\hat{V}_{G}|} \ \alpha^{ |\gammaplus|} \ 2^{|\gammamoins|}. \label{final:2}
\end{equation}

Now for all $G \in \mathcal{G}\left( \gammaplus \right)$ with $c$ connected parts, \eqref{t:7} implies that the number of edges can be written as $\norm{E_G} = \norm{\gammamoins} - c +k$ with a certain $k \in \nat$. Therefore $ \mathcal{G}\left( \gammaplus \right) = \bigcup_{\substack{c \in \nat \\ k \in \nat}} \mathcal{G} \left( \gammaplus , c, \norm{\gammamoins} - c + k \right)$ and, by virtue of the graph properties established above, we have, using first equation \eqref{t:8} and then equation \eqref{t:9},
\begin{align}
\sum_{G \in \mathcal{G} \left( \gammaplus \right)}  \tilde{\epsilon}^{|\hat{V}_{G}|} &\leq \sum_{c=0}^\infty \ \sum_{k=0}^\infty \norm{\mathcal{G} \left( \gammaplus , c , \norm{\gammamoins } -c +k \right)}  \tilde{\epsilon}^{\frac{1}{1+2qr^{-1}} ( | \gammamoins| -c +k)+c}   \notag \\
&\leq \sum_{c=0}^\infty \ \sum_{k=0}^\infty 2^{\norm{\gammamoins}} . [2^q(R^2+2R)]^{2(|\gammamoins|-c+k)} \tilde{\epsilon}^{\frac{1}{1+2qr^{-1}} ( | \gammamoins| +2qr^{-1} c+k)} \notag \\
&\leq \frac{\left( 2.[2^q(R^2+2R)]^2 \tilde{\epsilon}^{\frac{1}{1+2qr^{-1}}} \right) ^{|\gammamoins|} }{\bigg(1-[2^q(R^2+2R)]^2 \tilde{\epsilon}^{\frac{1}{1+2qr^{-1}}}\bigg) \bigg(1-[2^q(R^2+2R)]^{-2} \tilde{\epsilon}^{\frac{2qr^{-1}}{1+2qr^{-1}}}\bigg)} , \label{final:1}
\end{align}
provided that $\epsilon$ and $\epsilon'$ are such that $[2^q(R^2+2R)]^2 \tilde{\epsilon}^{1/(1+2qr^{-1})} <1$.
\begin{sloppypar}
Combining \eqref{graph:1}, \eqref{final:2} and \eqref{final:1}, we find
\begin{align*}
&\sum_{x \in \plan} \sum_{ \gamma \in P_x}  \sum_{\gammaplus \subseteq \gamma \setminus \{ \gamma_0 \}}  \norm{ \prod_{t=1}^n \left( \charf{F(\gammaplus,t)}   T^{(\gamma_t,\gamma_{t-1})}  \right) \charf{F(\gammaplus,0)}  \proj{\gamma_0} \mu} \norm{\delta_x \varphi}_{\infty} \notag \\
& \qquad \qquad \leq C \sum_{x \in \plan} \sum_{ \gamma \in P_x}  \sum_{\gammaplus \subseteq \gamma \setminus \{ \gamma_0 \}}  \alpha^{ |\gammaplus|} \left( 4.[2^q(R^2+2R)]^2 \tilde{\epsilon}^{\frac{1}{1+2qr^{-1}}} \right) ^{|\gammamoins|} \norm{\delta_x \varphi}_{\infty},
\end{align*}
where
\begin{equation*}
C=\frac{2 K}{ \bigg(1-[2^q(R^2+2R)]^2 \tilde{\epsilon}^{\frac{1}{1+2qr^{-1}}} \bigg) \bigg(1-[2^q(R^2+2R)]^{-2} \tilde{\epsilon}^{\frac{2qr^{-1}}{1+2qr^{-1}}}\bigg)}.
\end{equation*}
Here, we can use Newton's binomial formula: for any finite set $A$ and any $x$ and $y$ in $\real$, $\sum_{ B \subseteq A} x^{\norm{B}} y^{\norm{A \setminus B}} = (x+y)^{\norm{A}}$. Finally, since $\norm{P_x} =R^n$ and keeping definition \eqref{re:1} in mind, we get the desired upper bound if we take $\sigma =  R ( \alpha + 4 . [2^q(R^2+2R)]^2 \tilde{\epsilon}^{1/(1+2qr^{-1})}  )$. For $\sigma$ to be lower than $1$, the parameters must satisfy $\alpha < \alpha_* = 1/R$ and $\tilde{\epsilon} = \max \{ \epsilon, \epsilon' \}  < \epsilon_* (\alpha) $ where the function $\epsilon_*$ is strictly decreasing and positive on $[0, \alpha_* [$.\end{sloppypar}
\end{proof}
Using Lemmas \ref{lemma1} and \ref{lemma:2} and the previous sections, we now prove Theorem \ref{thm:exp}.
\begin{proof} [Theorem \ref{thm:exp}]
We take the constant $\alpha_*$ and the function $\epsilon_*$ obtained in Lemma \ref{lemma:2}. Let us consider any PCA satisfying Assumptions (A\ref{assump:2}) and (A\ref{assump:3}) with parameters $\alpha <\alpha_*$ and $\epsilon < \epsilon_*(\alpha)$.
We write $T^n \mu (\varphi) - \muinv{+} (\varphi) = T^n \bar{\mu} (\varphi) $ where $\bar{\mu}= \mu - \muinv{+} \in \mathcal{M}(X)$ is a signed measure of zero mass. Therefore we can apply the above path expansion and pure phase expansion to $T^n \bar{\mu}$ which then satisfies \eqref{decoupling:2}. Using definition \eqref{muinv}, $\bar{\mu}$ can be rewritten as $\bar{\mu} =  \lim_{j \to \infty}^* \frac{1}{n_j} \sum_{k=0}^{n_j-1} \left( \mu - T^k \sleb{+} \right)$, so that $\norm{T^n \bar{\mu}(\varphi) }$ is bounded above by
\begin{align}
& \liminf_{j \to \infty} \frac{1}{n_j} \sum_{k=0}^{n_j-1} \sum_{x \in \plan} \sum_{ \gamma \in P_x} \sum_{\gammaplus \subseteq \gamma \setminus \{ \gamma_0 \}}  \left[ \norm{ \prod_{t=1}^n \left( \charf{F(\gammaplus,t)}   T^{(\gamma_t,\gamma_{t-1})}  \right) \charf{F(\gammaplus,0)} \proj{\gamma_0}  \mu} \right.\notag \\
&\qquad \qquad+ \left. \norm{  \prod_{t=1}^n \left( \charf{F(\gammaplus,t)}   T^{(\gamma_t,\gamma_{t-1})}  \right) \charf{F(\gammaplus,0)} \proj{\gamma_0}  T^k \sleb{+} }\right] \norm{\delta_x \varphi}_{\infty}. \label{exp:1}
\end{align}

As the probability measure $\mu$ belongs to $  \brond{+} \left( K, \epsilon' \right)$ with $\epsilon'<\epsilon_*(\alpha)$, Lemma \ref{lemma:2} applies to the first part of \eqref{exp:1}:
\begin{equation} \label{exp:7}
\sum_{x \in \plan} \sum_{ \gamma  \in P_x} \sum_{\gammaplus \subseteq \gamma \setminus \{ \gamma_0 \}}  \norm{ \prod_{t=1}^n \left( \charf{F(\gammaplus,t)}   T^{(\gamma_t,\gamma_{t-1})}  \right) \charf{F(\gammaplus,0)} \proj{\gamma_0}  \mu} \norm{\delta_x \varphi}_{\infty} \leq C \dnorm{\varphi} \sigma^n,
\end{equation}
where $C<\infty$ and $\sigma<1$ are given in Lemma \ref{lemma:2}. 

As for the second part of \eqref{exp:1}, it can be bounded thanks to a slight modification of the same arguments. We don't know whether all considered models have the property that $T^k \sleb{+}$ belongs to some $\brond{+} \left( K, \epsilon' \right)$.
Therefore, we will now extend the Toom graphs up to time $-k$ instead of $0$. For fixed $k \in \nat$, $x \in \plan$, $\gamma \in P_x$ and $\gammaplus \subseteq  \gamma \setminus \{ \gamma_0 \}$, we define
\begin{equation*}
\mathcal{E}_k \left( \gammaplus \right) = \left\{ \left( \vect{\omega}_t \right) _{t=-k}^n \in X^{\{-k,\ldots , n \}} \mid \vect{\omega}_{t} \in F \left( \gammaplus, t \right) \quad \forall t \in \{ 0, \dots, n \} \right\}.
\end{equation*}
A map $g_k$ is defined, similarly to the map $g$ above:
\begin{align*}
g_k : \mathcal{E}_k \left( \gammaplus \right) &\to \mathcal{G}_k  \left( \gammaplus \right) = g_k \left(\mathcal{E}_k \left( \gammaplus \right) \right)\\
\left( \vect{\omega}_t \right) _{t=-k}^n &\mapsto G = g_k \left( \left( \vect{\omega}_t \right) _{t=-k}^n\right).\notag
\end{align*}
The only change in the graph construction algorithm takes place whenever a Toom graph $G_i$ under construction reaches a negative spin at time $t=0$. Instead of classing it into the set of identified errors $\hat{V}_{G_i}$, we carry on the construction of this branch of $G_i$, as for positive times, until we meet either an error or a negative spin at time $-k$, which is now considered the initial time, and class it in $\hat{V}_{G_i}$. Again, this merely amounts to a translation of the initial condition $\sleb{+}$ by $k$ units along the time axis. At the end, we obtain a disconnected union of Toom graphs on $\{ -k,, \dots, n\} \times \plan$, with origins $\left( t,\gamma_{t} \right) \in \gammamoins$ and with the properties described above. In particular, all the previous results about $\mathcal{G} \left( \gammaplus \right)$ still hold for $\mathcal{G}_k \left( \gammaplus \right)$ and for the associated subsets $\mathcal{G}_k \left( \gammaplus, c, m \right)$ of graphs with $c$ connected parts and $m$ edges.

Extending definition \eqref{toom:6} of $ E(G,t)$ to graphs $G \in \mathcal{G}_k  \left( \gammaplus \right) $ and to negative times, the analogs for $g_k$ of \eqref{p:2} and \eqref{toom:7} lead to the following graph expansion for the second part of \eqref{exp:1}:
\begin{align}
&\norm{ \prod_{t=1}^n \left( \charf{F(\gammaplus,t)}   T^{(\gamma_t,\gamma_{t-1})}  \right) \charf{F(\gammaplus,0)}  \proj{\gamma_0} T^k \sleb{+} } \notag \\
& \qquad \leq \sum_{G \in \mathcal{G}_k \left( \gammaplus \right)} \left\lvert  \prod_{t=1}^n \left( \charf{F(\gammaplus,t)\cap E(G,t)}   \tilde{T}^{(\gamma_t,\gamma_{t-1})}  \right) \right. \label{exp:6} \\
& \qquad \qquad \qquad \qquad \left. \circ \ \charf{F(\gammaplus,0)\cap E(G,0)}  \Theta_{\gamma_{0}} \prod_{t=-k+1}^{-1} \left( \charf{E(G,t)} T \right) \charf{E(G,-k)} \sleb{+} \right\rvert ,\notag
\end{align}
where we defined the operator $\Theta_x$:
\begin{equation*}
\Theta_x \mu (\varphi) = \int \dif \vect{\omega} \int \dif \vect{\xi} \; \varphi(\vect{\xi}) \norm{\left( \proj{x} \prob[ \, \cdot \, | \vect{\omega}] \right) \left( \vect{\xi} \right) } \mu ( \vect{\omega} ) ,
\end{equation*}
with $\proj{x}$ acting on the measure $\prob[\, \cdot  \, | \vect{\omega}]$.

Lemma \ref{lemma1} as well extends immediately to graphs $G \in \mathcal{G}_k  \left( \gammaplus \right) $. Applying it $n$ times to the RHS of \eqref{exp:6}, we find that it is bounded above by
\begin{equation*}
\sum_{G \in \mathcal{G}_k \left( \gammaplus \right)} \epsilon^{\sum_{t=1}^n |\hat{V}_{G,t}|} \alpha^{ |\gammaplus|}  2^{|\gammamoins|} \norm{ \charf{F(\gammaplus,0)\cap E(G,0)}  \Theta_{\gamma_{0}} \prod_{t=-k+1}^{-1} \left( \charf{E(G,t)} T \right) \charf{E(G,-k)} \sleb{+} } .
\end{equation*}
Then we can use again Assumption (A\ref{assump:2}) $k$ times.
Indeed, we know from definition \eqref{toom:6} that $\charf{E(G,t)} \leq \charf{-,\hat{V}_{G,t}}$ and from Assumption (A\ref{assump:2}) that $\norm{\charf{-,\hat{V}_{G,t}} T\charf{E(G,t-1)} \nu }\leq \epsilon^{ |\hat{V}_{G,t}|} \norm{\charf{E(G,t-1)} \nu } $ for any probability measure $\nu$.
The operator $\Theta_{\gamma_{0}} $ can be handled in the same way as $T$, taking its definition into account, together with the fact that $\charf{-,\hat{V}_{G,0}}( \vect{\omega}_{\succeq x}, \vect{a}_{\prec x} ) \leq \charf{-,\hat{V}_{G,0}} ( \vect{\omega})$. It will simply introduce an extra factor of $2$ due to the operator $\proj{\gamma_0}$.
Lastly, by definition of $\sleb{+}$, $\sleb{+} \left( \charf{E(G,-k)} \right)=0$ unless $\hat{V}_{G,-k}$ is empty, in which case $\sleb{+} \left( \charf{E(G,-k)}  \right)=1$. Consequently, the RHS of \eqref{exp:6} is lower than
\begin{equation*}
 2  \sum_{G \in \mathcal{G}_k \left( \gammaplus \right)} \epsilon^{ |\hat{V}_{G}|} \ \alpha^{ |\gammaplus|} \ 2^{ |\gammamoins|}. \label{exp:5}
\end{equation*}
And so, performing the same calculations as for the proof of Lemma \ref{lemma:2}, we obtain, for the second part of \eqref{exp:1},
\begin{align}
& \sum_{x \in \plan}\ \sum_{ \gamma \in P_x} \ \sum_{\gammaplus \subseteq \gamma \setminus \{ \gamma_0 \}}  \norm{ \prod_{t=1}^n \left( \charf{F(\gammaplus,t)}   T^{(\gamma_t,\gamma_{t-1})}  \right) \charf{F(\gammaplus,0)} \proj{\gamma_0} T^k \sleb{+}} \norm{\delta_x \varphi}_{\infty} \notag \\
& \qquad \leq C_{\text{inv}}  \dnorm{\varphi} \sigma^n , \label{exp:8}
\end{align}
where
\begin{equation*}
C_{\text{inv}}=\frac{2 }{ \bigg(1-[2^q(R^2+2R)]^2 \epsilon^{\frac{1}{1+2qr^{-1}}} \bigg) \bigg(1-[2^q(R^2+2R)]^{-2} \epsilon^{\frac{2qr^{-1}}{1+2qr^{-1}}}\bigg)}.
\end{equation*}

Inserting \eqref{exp:7} and \eqref{exp:8} in \eqref{exp:1}, we conclude:
\begin{equation}
\norm{T^n \mu (\varphi) - \muinv{+} (\varphi) }  \leq C \dnorm{\varphi}  \sigma^n , \notag
\end{equation}
where we renamed $C +C_{\text{inv}} $ to $C$ for simplicity.
\end{proof}
\section{Exponential Decay of Correlations}\label{section:corollaries}
We will now prove a well-known consequence of Theorem \ref{thm:exp}: the invariant measure $\muinv{+}$ presents exponential decay of correlations in space.
\begin{sloppypar}
\begin{proof}[Corollary]
Since $\sleb{+}$ belongs to $\brond{+}(1,0)$, Theorem \ref{thm:exp} implies that for some $C < \infty$ and some $\sigma < 1$, we have, for all $k \in \nat$,
\begin{equation*}\begin{cases}
\norm{T^k \sleb{+}(\varphi \psi) - \muinv{+}(\varphi \psi)} \leq C \dnorm{\varphi \psi} \sigma^{k} ; \\
\norm{T^k \sleb{+}(\varphi) - \muinv{+}(\varphi)} \leq C \dnorm{\varphi} \sigma^{k} ; \\
\norm{T^k \sleb{+}(\psi) - \muinv{+}(\psi)} \leq C  \dnorm{\psi} \sigma^{k} . \end{cases}\end{equation*}
But $\sleb{+}$ is a product measure and the interactions are local, so
\begin{equation*}
T^k \sleb{+}(\varphi \psi) = T^k \sleb{+}(\varphi)  T^k \sleb{+}(\psi)
\end{equation*}
as long as $2 k \,  \max_{u \in U} \dnorm{u}_1< d(\varphi,\psi)$, where $\dnorm{\cdot}_1$ is the Manhattan norm. 
Since $d(\varphi,\psi) > 0$, we also have
\begin{equation*}
\dnorm{\varphi \psi } \leq \dnorm{\varphi} \norm{\psi}_{\infty} + \norm{\varphi}_{\infty} \dnorm{\psi} < \infty.
\end{equation*}
Hence, as long as $2 k \,  \max_{u \in U} \dnorm{u}_1< d(\varphi,\psi)$,
\begin{equation*}
\norm{\muinv{+}(\varphi \psi) - \muinv{+}(\varphi) \muinv{+}(\psi)} \leq 2 C \left( \dnorm{\varphi} \norm{\psi}_{\infty} + \norm{\varphi}_{\infty} \dnorm{\psi} \right) \sigma^{k}.
\end{equation*}
If we take $k = \lceil d(\varphi,\psi)/2 \max_{u \in U} \dnorm{u}_1 \rceil -1$ and choose $C'= \frac{2C}{\sigma}$ and $\eta=\sigma^{\frac{1}{2 \max_{u \in U} \dnorm{u}_1}}$, we obtain the desired inequality.
\end{proof}\end{sloppypar}

The invariant measure $\muinv{+}$ also exhibits exponential decay of correlations in time. 
\begin{proof}[Theorem \ref{coroll:2}]
Theorem \ref{thm:exp} applied to $\sleb{+} \in \brond{+}(1,0)$ implies that, for all $k\in \nat$,
\begin{equation} \label{cor:3}
\norm{\muinv{+} (\varphi T^n \psi) - T^k \sleb{+} (\varphi T^n \psi) } \leq C \dnorm{\varphi T^n \psi} \sigma^k . 
\end{equation}
Now, remembering definition \eqref{re:1} of the semi-norm $\dnorm{.}$, we notice that 
\begin{equation*}
\dnorm{\varphi T^n \psi} \leq 2 \norm{\varphi}_{\infty} \norm{\psi}_{\infty} \norm{\text{supp}(\varphi T^n \psi)}.
\end{equation*}
But the conical structure of the space-time influence of spins implies
\begin{equation*}
\norm{\text{supp}(\varphi T^n \psi)}  \leq \norm{\text{supp}(\varphi)} + \big( \text{diam}(\text{supp}(\psi))+2n \max_{u \in U} \dnorm{u}_1 \big)^d,
\end{equation*}
so there exists a finite constant $  C'_{\varphi, \psi}$ such that $\dnorm{\varphi T^n \psi}\leq C'_{\varphi, \psi} \, n^d $ for all $n \in \nat_0$.

Theorem \ref{thm:exp} also yields
\begin{align}
&\norm{\muinv{+}(\varphi) \muinv{+}(\psi) - \muinv{+}(\varphi)T^{k+n} \sleb{+}(\psi)} \leq C \norm{\varphi}_{\infty} \dnorm{\psi} \sigma^{k+n}  ;\label{cor:7} \\
&\norm{\muinv{+}(\varphi)T^{k+n} \sleb{+}(\psi) - T^k \sleb{+}(\varphi ) T^{k+n} \sleb{+}( \psi)} \leq C  \norm{ \psi}_{\infty} \dnorm{\varphi} \sigma^k . \label{cor:8}
\end{align}

In order to find an upper bound for $ \norm{T^k \sleb{+}\big( \left(\varphi - T^k \sleb{+} (\varphi) \right)  T^n\psi\big) }$, we rewrite it as $\norm{T^n \mu^{(k)}_{\varphi} (\psi)}$, where the signed measure $\mu^{(k)}_{\varphi} \in \mathcal{M}(X)$ is defined by
\begin{equation*}
\mu^{(k)}_{\varphi} (\psi) = T^k \sleb{+} \left( \left( \varphi - T^k \sleb{+} (\varphi) \right) \psi \right) \quad \forall \psi \in \mathcal{C}(X) .
\end{equation*}
$\mu^{(k)}_{\varphi} $ is a measure of zero mass so it satisfies \eqref{decoupling:2}:
\begin{equation*}
\norm{T^n \mu^{(k)}_{\varphi}(\psi)} \leq \sum_{x \in \plan}\ \sum_{ \gamma \in P_x} \ \sum_{\gammaplus \subseteq \gamma \setminus \{ \gamma_0 \}}  \norm{\prod_{t=1}^n \left( \charf{F(\gammaplus,t)}   T^{(\gamma_t,\gamma_{t-1})}  \right) \charf{F(\gammaplus,0)}  \proj{\gamma_0}\mu^{(k)}_{\varphi}}\norm{\delta_x \psi}_{\infty} .
\end{equation*}
The last expression is similar to the second term of \eqref{exp:1} with $\mu^{(k)}_{\varphi}$ instead of $T^k \sleb{+}$ and we can then perform the same argument as in the proof of Theorem \ref{thm:exp} in order to bound $\norm{T^n \mu^{(k)}_{\varphi} (\psi)}$, extending again the collections of Toom graphs up to time $-k$. The only change in these calculations is the presence of an extra factor $\varphi- T^k \sleb{+} (\varphi)$ whose supremum norm is bounded by $2 \norm{\varphi}_{\infty}$. 
Provided we keep track of this factor, the calculations which lead to \eqref{exp:8} still hold here:
\begin{equation}\label{cor:6}
 \norm{T^k \sleb{+}\left( \left(\varphi - T^k \sleb{+} (\varphi) \right)  T^n\psi\right) } \leq 2  \  C_{\text{inv}}  \norm{\varphi}_{\infty} \dnorm{\psi} \sigma^n .
\end{equation}
Combining \eqref{cor:3}, \eqref{cor:7}, \eqref{cor:8} and \eqref{cor:6} and taking the $k \to \infty$ limit ends the proof.
\end{proof}

Theorem \ref{coroll:2} implies that $\muinv{+}$ is not only ergodic but also mixing, that is, for any continuous functions $\varphi, \psi$, we have
\begin{equation*}
\lim_{n \to \infty} \norm{\muinv{+}(\varphi T^n\psi) - \muinv{+}(\varphi) \muinv{+}(\psi)}  = 0.
\end{equation*}
Indeed, by definition of $\mathcal{C}(X)$, the set of continuous functions with finite support is dense in $\mathcal{C}(X)$.

\begin{acknowledgements}
The authors would like to thank Jean Bricmont, Carlangelo Liverani and Christian Maes for helpful comments and discussions. They also thank the referees for their constructive remarks and suggestions.\\
Augustin de Maere was partially supported by the Belgian IAP (Interuniversity Attraction Pole) program P6/02.\\Lise Ponselet was supported by a grant from the Belgian F.R.S.-FNRS (Fonds de la Recherche Scientifique) as `Aspirant FNRS'.\\
This article is published in J. Stat. Phys. 147(3), 634-652 (2012). The final publication is available at www.springerlink.com. DOI 10.1007/s10955-012-0487-9
\end{acknowledgements}

\bibliography{bibliocompl}{}
\bibliographystyle{spmpsci} 

\end{document}